\documentclass[acmsmall,screen]{acmart}
\usepackage{xspace,graphicx,color,cleveref}
\usepackage{float}
\usepackage{mathpartir}
\usepackage{tikz}
\usetikzlibrary{trees,arrows.meta,positioning}
\usepackage{graphicx}
\usepackage{subcaption}

\newcommand{\name}{Chordata\xspace}
\newcommand{\todo}[1]{{\color{red}#1\xspace}}

\setcopyright{acmlicensed}
\copyrightyear{2026}
\acmYear{2026}
\acmDOI{XXXXXXX.XXXXXXX}

\begin{document}

\title{Incremental Live Programming via Shortcut Memoization}

\author{Marisa Kirisame\textsuperscript{*}}
\email{marisa@cs.utah.edu}
\author{Thomas J. Porter\textsuperscript{*}}
\email{thomasjp@umich.edu}
\author{Ruqing Yang}
\email{ryangaq@connect.ust.hk}
\author{Jianqiu Zhao}
\email{qiuscau@stu.scau.edu.cn}
\author{Yudi Wu}
\email{9680384274@g.ecc.u-tokyo.ac.jp}
\author{Ivan Wei}
\email{ivanwei@umich.edu}
\author{Cyrus Omar}
\email{comar@umich.edu}
\author{Pavel Panchekha}
\email{pavpan@cs.utah.edu}

\newcommand{\Variable}{\texttt{Variable}}
\newcommand{\Control}{\texttt{Control}}
\newcommand{\Nil}{\texttt{Nil}}
\newcommand{\Cons}{\texttt{Cons}} 
\newcommand{\Match}{\texttt{Match}} 
\newcommand{\Value}{\texttt{Value}}
\newcommand{\Env}{\texttt{Env}}
\newcommand{\Kont}{\texttt{Kont}}
\newcommand{\OK}{\texttt{OK}}
\newcommand{\State}{\texttt{State}}
\newcommand{\Eval}{\texttt{Eval}}
\newcommand{\Apply}{\texttt{Apply}}

\newcommand{\Constructor}{\texttt{Constructor}}
\newcommand{\Term}{\texttt{Term}}
\newcommand{\PatternVar}{\texttt{PatternVar}}
\newcommand{\Pattern}{\texttt{Pattern}}
\newcommand{\Rule}{\texttt{Rule}}
\newcommand{\Substitution}{\texttt{Substitution}}

\newcommand{\Let}{\texttt{Let}} 
\newcommand{\In}{\texttt{In}} 
\newcommand{\Int}{\texttt{Int}}
\newcommand{\Type}{\texttt{Type}}
\newcommand{\Function}{\texttt{Function}}
\newcommand{\DataType}{\texttt{DataType}}
\newcommand{\Definition}{\texttt{Definition}}
\newcommand{\Add}{\texttt{Add}}

\newcommand{\TotalDegree}{\texttt{total\_degree}}
\newcommand{\MaxDegree}{\texttt{max\_degree}}
\newcommand{\hazelPointCount}{4615}
\newcommand{\hazelTotalSpeedup}{$13.03\times$}
\newcommand{\hazelTotalMemoryOverhead}{$19.97\times$}
\newcommand{\hazelSpeedBreakdown}{%
\begin{tabular}{lr}
\hline
Name & Percent \\
\hline
compose rule & 28.97\% \\
lookup rule & 28.88\% \\
intantiation & 18.36\% \\
insert rule & 12.87\% \\
apply rule & 10.07\% \\
misc & 0.86\% \\
\hline
\end{tabular}%
}
\newcommand{\hazelSpeedupTable}{%
\begin{tabular}{l|rrr|rrr}
\hline
Benchmark & Alice & Bob & Cam & Alice & Bob & Cam \\
 & \multicolumn{3}{c|}{time speedup} & \multicolumn{3}{c}{memory overhead} \\
\hline
Append & 18.2 & 19.43 & 22.43 & 6.931 & 12.16 & 14.88 \\
Filter & 2.558 & 11.49 & 11.32 & 12.73 & 8.693 & 10.29 \\
Map & 12.64 & 4.564 & 7.209 & 5.674 & 9.07 & 7.654 \\
QuickSort & 15.14 & 8.868 & 8.91 & 48.2 & 37.1 & 73.98 \\
InsertSort & 13.27 & 18.56 & 16.99 & 58.58 & 28.85 & 33.39 \\
MergeSort & 5.079 & 15.7 & X & 121.1 & 186.9 & X \\
Pair & 8.231 & 12.7 & X & 9.399 & 11.03 & X \\
Reverse & 4.775 & 12.83 & 21.45 & 29.95 & 17.17 & 8.007 \\
\hline
\end{tabular}%
}

\newcommand{\arithTotalSpeedup}{$2.464\times$}

\begin{abstract}
Live programming systems 
  aim to quickly show programmers the dynamic impacts of program edits.
To do so, they re-execute the program whenever it is edited, 
  which poses a computational challenge
  when programs become large or complex.
  This has led to the need for incrementality 
  in the implementation of live program interpreters. 
This paper introduces \name, an incremental program interpreter
  based on \emph{shortcut memoization},
  which learns repeated patterns of computation,
  called shortcuts, by observing executions of previous versions of a program.
It can then apply these shortcuts
  when the same or a \emph{structurally similar} program fragment is re-executed.
This paper contributes a formal semantics of shortcut memoization
  for any language with a rewrite-based semantics, 
  with mechanized proofs of key correctness properties. 
We then express a variant of the Hazel live programming system, 
  expressed as a CEK machine, in \name,
  and develop a number of practical heuristics 
  to learn high-value shortcuts. 
We evaluate the resulting system on editing traces
  of students solving simple programming problems.
\name achieves a speedup of \hazelTotalSpeedup{} compared to baseline with a \hazelTotalMemoryOverhead{} memory overhead.
For smaller changes and for more complex programs, \name achieves even greater speedups.
Furthermore, we show that \name is capable of providing a speedup 
  even within a single execution, with a faster speedup on a larger input.
\end{abstract}

\begin{CCSXML}
<ccs2012>
   <concept>
       <concept_id>10011007.10011006.10011008.10011024</concept_id>
       <concept_desc>Software and its engineering~Language features</concept_desc>
       <concept_significance>500</concept_significance>
       </concept>
   <concept>
       <concept_id>10003752.10010124.10010131.10010134</concept_id>
       <concept_desc>Theory of computation~Operational semantics</concept_desc>
       <concept_significance>300</concept_significance>
       </concept>
   <concept>
       <concept_id>10011007.10011006.10011041.10011048</concept_id>
       <concept_desc>Software and its engineering~Runtime environments</concept_desc>
       <concept_significance>300</concept_significance>
       </concept>
   <concept>
       <concept_id>10003752.10003809.10010031</concept_id>
       <concept_desc>Theory of computation~Data structures design and analysis</concept_desc>
       <concept_significance>100</concept_significance>
       </concept>
 </ccs2012>
\end{CCSXML}

\ccsdesc[500]{Software and its engineering~Language features}
\ccsdesc[300]{Theory of computation~Operational semantics}
\ccsdesc[300]{Software and its engineering~Runtime environments}
\ccsdesc[100]{Theory of computation~Data structures design and analysis}

\keywords{
incremental computation,
memoization,
live programming
}

\maketitle
\footnotetext{\textsuperscript{*}These authors contributed equally to this work.}

\section{Introduction}
Live programming environments, which interleave program editing with execution~\cite{tanimoto}, have a long and storied history. 
  Spreadsheets pioneered live feedback in response to edits~\cite{spreadsheets}, and there has been a long line of work on other systems with varying degrees of liveness, including modern systems such as Hazel (functional programming)~\cite{popl19}, Jupyter (data science)~\cite{ipython} and Vercel (web development)~\cite{vercel}. 

For an environment to feel truly live,
  the user must see the results of their edits quickly;
  long latency windows cause second-guessing and mistakes,
  and disrupt the user's focus~\cite{vlhcc25}.
This is especially true in systems like Hazel, 
  where the program is executed on the fly with every keystroke, even when there are holes and errors in the program (\Cref{fig:hazel}).

\begin{figure}
    \includegraphics[width=0.49\linewidth]{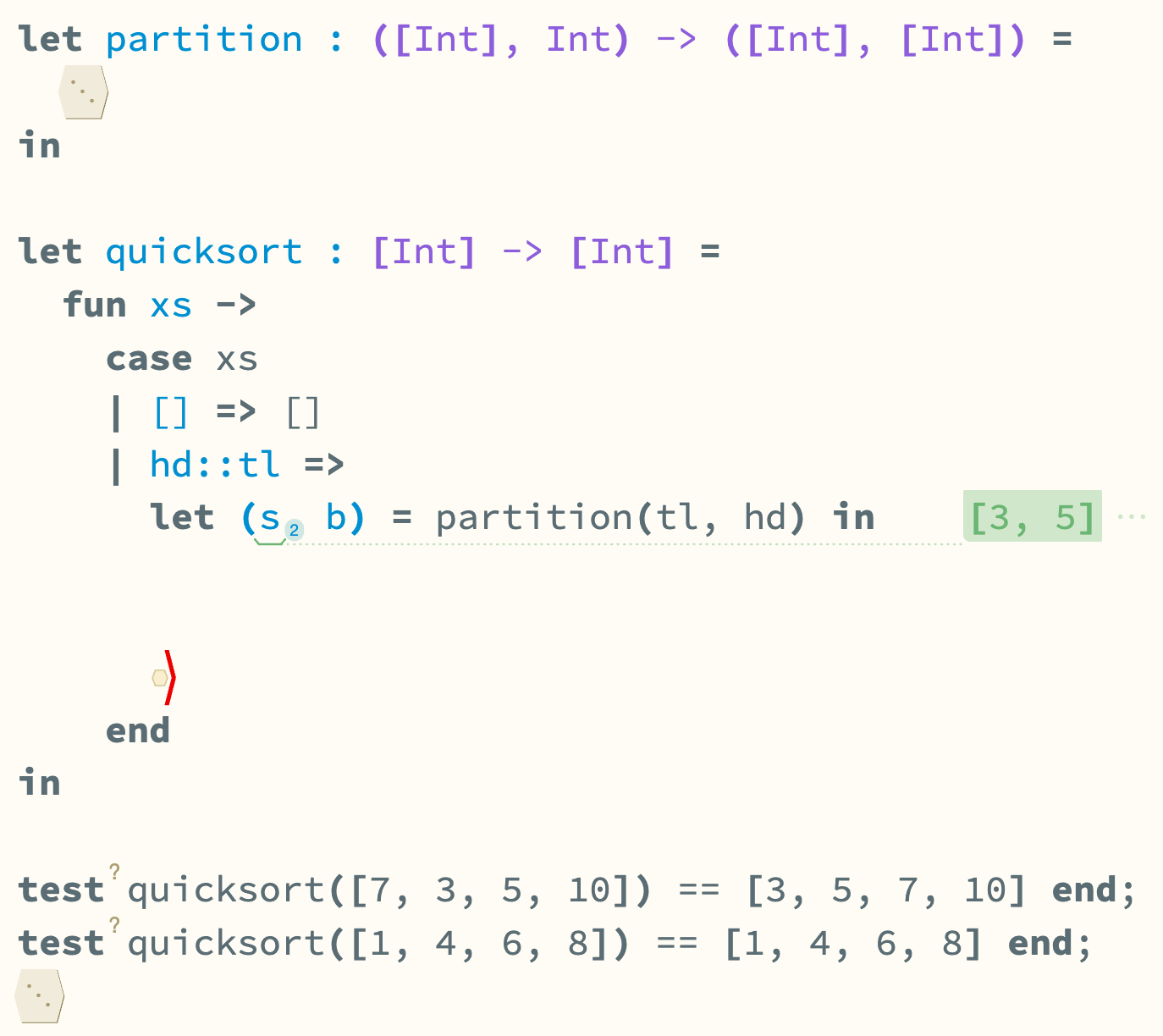}
    \includegraphics[width=0.49\linewidth]{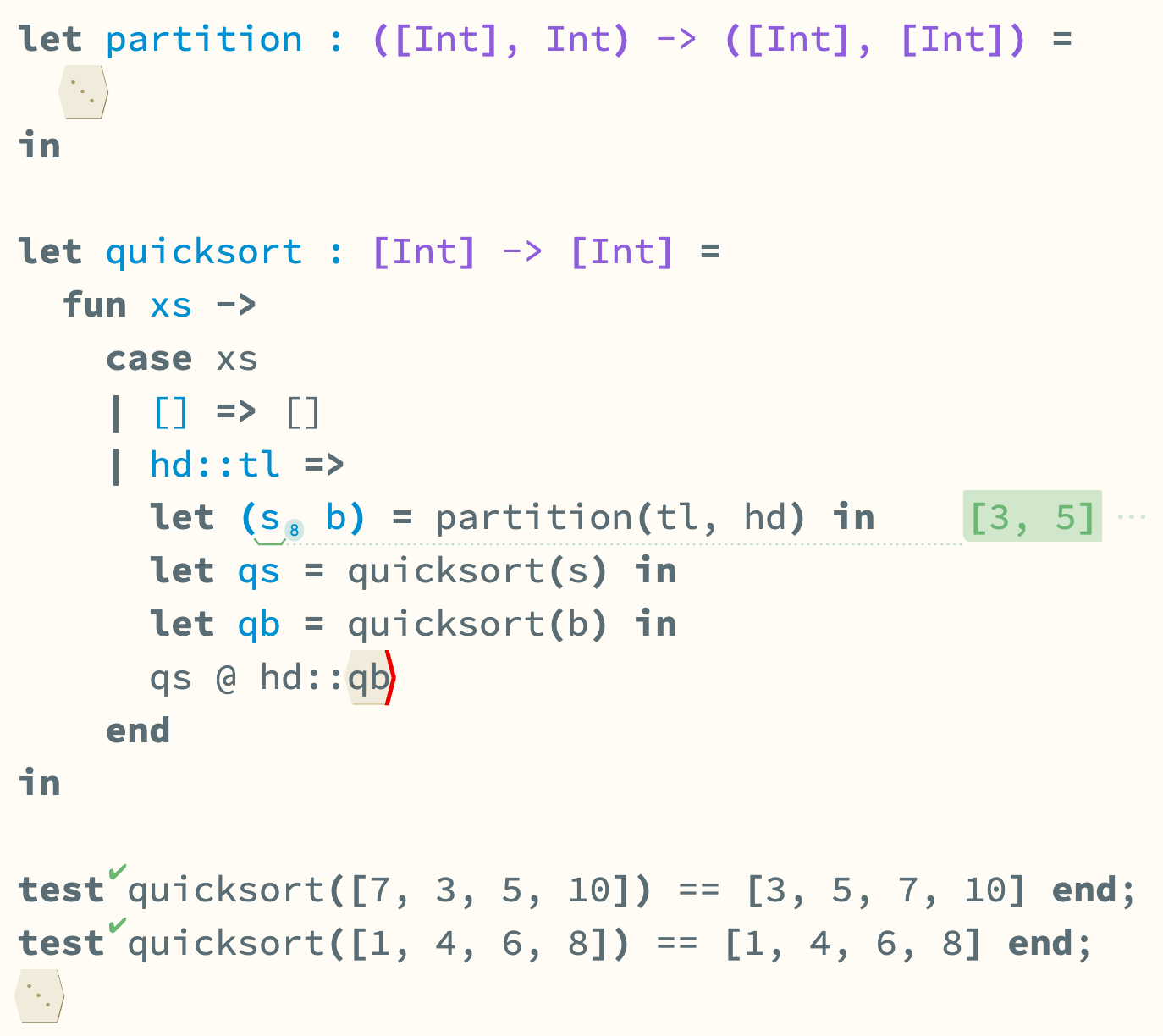}
    \caption{Screenshot of implementing \texttt{quicksort} in the Hazel live programming environment. While the user is editing the function, the program is incomplete (the hole at the red cursor on the left) and/or ill typed (not shown).
    The user can insert a probe to receive live feedback about a component of the result of calling \texttt{partition}, here \texttt{s}. As the user fills in the remaining function (right), Hazel will re-execute the program on every keystroke, updating the probe (here, showing that there are 8 total probed values available rather than the initial 2 from the tests shown). Currently, Hazel does a full re-execution on every keystroke, which can have high latency even for small programs, e.g. when there are many tests being run (elided here). The aim of \name is to speed up these re-executions by learning shortcuts from prior executions.}
    \Description{Two Hazel screenshots while editing quicksort.
    The left screenshot shows an incomplete function with a hole at
    the cursor and a probe displaying intermediate results.
    The right screenshot shows a more complete version of the
    program with more probe results. The figure illustrates that
    Hazel re-executes the program on each edit.}
    \label{fig:hazel}
\end{figure}

An approach to address this latency challenge
  is with \emph{incremental computation}.
Across different areas of computing,
 there is often a need to cache information from earlier executions
  that can be reused to speed up later executions of structurally similar programs (or the same program with similar input).
This general approach is seen, for example,
  in differential dataflow in databases~\cite{differential-dataflow},
  incremental layout in web browsers~\cite{incremental-layout},
  incremental parsing~\cite{incremental-parsing},
  type checking~\cite{incremental-type-checking},
  compilation~\cite{incremental-compilation}, 
  and even linking in programming systems~\cite{incremental-linking}. 
In live programming systems, there is an acute need for a general-purpose
  incremental program interpreter that is able to exploit the structure 
  shared across edits to speed up execution.

Many incremental programming systems, 
  including spreadsheets~\cite{spreadsheets} 
  and build systems~\cite{build-systems-ala-carte}, 
  build a \emph{dependency graph} between coarsely distinguished program structures (e.g. modules, or cells in a notebook or spreadsheet) 
  that then drives incrementality. 
Edits cause recomputation to occur only for structures downstream along the dependency graph. However, there are two main problems with this approach. 

First, in live programming systems, many edits happen locally within a single structure and thus do not benefit from this approach (the \textbf{granularity problem}). For example, in \autoref{fig:hazel}, all edits occur within the \texttt{quicksort} function.

Second, these approaches determine dependencies 
  from the syntactic structure of the computation
  and thus struggle with sequences of edits 
  that leave the program syntactically quite different.
Often, semantic similarity is not exploited
  with spatially and temporally distal code elements
  (the \textbf{distal similarity problem}). 
For example, 
  edits may involve refactoring expressions into auxiliary functions, 
  lifting sub-expressions into bound variables, 
  removing duplicate code, or temporarily removing code. 
Edits may also involve optimizing code into an equivalent form,
  e.g. replacing $filter(\lambda\_.\texttt{true}, x)$ with $x$. 
We would like to be able to exploit these distal semantic similarities, 
  even when the syntax has been modified heavily.


\subsection{Shortcut Memoization}
This paper addresses these problems
  by introducing \emph{shortcut memoization},
  an approach that learns shortcuts from previous runs
  and reuses them when the machine encounters a similar program state again.
Specifically, shortcut memoization applies
  to abstract machines that execute programs
  through a step relation between states $\Sigma~\mapsto~\Sigma'$.
Shortcuts are stepping patterns
  $\Sigma[\bar{x}] \mapsto \Sigma'[\bar{x}]$
  that capture multiple steps of program execution.
The variables in these patterns abstract over uninspected portions of state,
  so they match more generally than just a single trigger state.

In our implementation of shortcut memoization, which we call \name,
  we fix the abstract machine to a CEK machine \cite{cek}.
We then develop heuristics
  to prune the space of possible shortcuts,
  selecting those that match often
  and skip many program states.
We use a discrimination tree approach 
  to efficiently match the corpus of learned shortcuts
  to the current program state.

Because \name learns shortcuts over machine states 
  rather than tracking binding dependencies, 
  reuse is determined by the structure of execution 
  rather than the syntactic form of the program. 
Consequently, shortcut memoization addresses both the 
  \textbf{granularity problem}
  and the \textbf{distal similarity problem}. 
We handle both program code and input data uniformly 
  in the machine state, 
  so \name naturally handles changes to programs and data. 
This makes \name particularly attractive 
  for live programming environments, 
  where programs change rapidly but often only in small ways.
In our evaluation, we show that \name is able to achieve
  speed-ups of up to \hazelTotalSpeedup
  \ on a selection of live program editing traces derived from Hazel users, 
  while paying \hazelTotalMemoryOverhead~ memory overhead.
We demonstrate in \Cref{sec:eval} that
  rewrite rules can also speed up individual executions,
  e.g. when they operate over repetitive data.

\subsection{Motivating example}
To explain the key mechanics of \name,
  consider the following bit-flipping function:
\[
    f([]) = [] \hspace{.5in}
    f(0 :: \ell) = 1 :: f(\ell) \hspace{.5in}
    f(1 :: \ell) = 0 :: f(\ell)
\]
Think about the mechanical, step by step execution
  of this function:
  every application of $f$ (except the last one)
  first tests the identity of the head of the list
  and then recurses by pushing a stack frame.
The stack ends up mirroring the original list,
  but reversed, with the head of the stack
  representing the last element of the list
  and each return address on the stack
  indicating which branch $f$ took,
  which means whether the list had a $0$ or $1$ element.
Then, when the tail of the list is reached,
  the output list is built
  by popping stack frames from the stack
  and constructing a list cell for each one.
In short, during execution, this program has two states---%
  one when it is recursing down the list,
  and another when it is constructing the result---%
  and each state has both a current list value
  and a current stack of stack frames.

\newcommand{\Bool}{\texttt{Bool}}
\newcommand{\True}{\texttt{True}}
\newcommand{\False}{\texttt{False}}

\newcommand{\List}{\texttt{List}}
\newcommand{\Stack}{\texttt{Stack}}
\newcommand{\Recurse}{\texttt{Recurse}}
\newcommand{\Pop}{\texttt{Pop}}
\newcommand{\Done}{\texttt{Done}}
\newcommand{\Empty}{\texttt{Empty}}

We can represent this low-level, mechanical view
  of program execution like so:
\[\begin{array}{lcllll}
\ell & \in & \List & \Coloneqq & [] \mid 0 :: \ell \mid 1 :: \ell\\
k & \in & \Stack & \Coloneqq & \Empty \mid \Cons1~k \mid \Cons0~k \\
s & \in & \State & \Coloneqq & \Recurse~\ell~k \mid \Pop~\ell~k \mid \Done~\ell
\end{array}\]

The initial execution $f(\ell)$
  corresponds to the state $\Recurse~\ell~\Empty$.
In this \Recurse{} state, execution recurses down the list:
\[\begin{array}{lcl}
    \Recurse~(0 :: \ell)~k & \mapsto & \Recurse~\ell~(\Cons1~k) \\
    \Recurse~(1 :: \ell)~k & \mapsto & \Recurse~\ell~(\Cons0~k)
\end{array}\]

Then, when execution reaches the end of the list,
  it switches to the \Pop{} state,
  which pops from the stack and constructs
  the output list $\ell_2$:
\[\begin{array}{lcl}
    \Recurse~\Nil~k & \mapsto & \Pop~\Nil~k \\
    \Pop~\ell~(\Cons0~k)
      & \mapsto & \Pop~(0 :: \ell)~k \\
    \Pop~\ell~(\Cons1~k)
      & \mapsto & \Pop~(1 :: \ell)~k \\
    \Pop~\ell~\Empty & \mapsto & \Done~\ell
\end{array}\]

Finally, the machine reaches the $\Done$ state,
  which represents the end of $f$'s execution.

An important property
  of this operational semantics
  is that it is defined by \emph{pure} rewrite rules.
Each step in the $\Sigma \mapsto \Sigma'$ relation
  has no premises or recursive sub-steps,
  so execution in these semantics consists solely of
  finding a rule whose left-hand side
  matches the current state
  and stepping to the new state
  given by the rule's right-hand side.
In fact, the variables
  on each rule's left-hand side are always distinct,
  and each rule's right-hand side only contains variables
  from the left hand side.
  Because of this, execution conceptually involves only
  matching constructors and binding variables.

Moreover, these rewrite rules can be composed.
For example, any state 
  $\Recurse~(1 :: 0 :: \ell)~k$
  will step first to $\Recurse~(0 :: \ell)~(\Cons0~k)$
  and then to $\Recurse~\ell~(\Cons1~(\Cons0~k))$,
  so the following rewrite rule,
  while not part of the operational semantics above,
  is also valid: 
\[
  \Recurse~(1 :: 0 :: \ell)~k \mapsto \Recurse~\ell~(\Cons1~(\Cons0~k))
\]
This new rewrite rule, which we call a \textit{shortcut},
  summarizes two execution steps in a single rule.
Adding it to the operational semantics
  does not change the meaning of any program,
  while also executing programs in fewer steps. 
\name does exactly this,
  deriving shortcuts by observing program execution
  and then extending its operational semantics
  with these learned rules. 

\subsection{Contributions}
\noindent
In short, \name is an incremental interpreter
  based on shortcuts rather than dependency graphs.
Its core contributions are as follows:
\begin{itemize}
\item A formal semantics for creating shortcuts, with a mechanized proof (in Agda) that rule composition is correct and most general (\Cref{sec:semantics}).
\item Heuristics for learning high-value shortcuts (\Cref{sec:learn})
\item An efficient insert/lookup procedure for matching shortcuts (\Cref{sec:lookup})
\item A data representation to speed up shortcut applications (\Cref{subsec:data-rep})
\item An optimized discrimination tree data structure to manipulate complex rewrite rules {(\Cref{subsec:data-rep})}
\item A collection of additional optimizations incorporated into \name (\Cref{sec:opt})
\item An evaluation of Chordata on live program traces from Hazel that demonstrates substantial speedups (\Cref{sec:eval})
\end{itemize}

\section{Formalization}
\label{sec:semantics}

The \name approach is applicable to
  any language of syntax trees
  endowed with small-step reduction semantics
  defined by a set of pattern-matching rules.
This section first shows an applicable type of semantics---%
  CEK machines---and then generalizes to arbitrary
  syntax trees and reduction semantics.%
\footnote{
Traditional small-step semantics
  involves environments, substitutions, 
  and recursive walks to find reducible expressions,
  which are not local pattern-matches
  and so not a good match for \name.
}
It then defines the composition operation
  that \name uses to find shortcuts,
  and proves that shortcuts exist and are valid:
  that is, that their application yields
  the same result as non-incremental evaluation.
  Every theorem in this section is justified by a mechanized proof in the Agda proof assistant (included in the supplemental material). 

\subsection{CEK Abstract Machine}
\label{sec:abstract-machine}

The CEK abstract machine \cite{cek}
  is a way to define a reduction semantics
  for a functional programming language
  via a set of rewrite rules
  on syntax trees over a set of constructors.
This makes it perfect for \name's shortcut memoization.
In the CEK machine,
  the language semantics is a step relation on
  a machine state consisting of three parts:
  the \textit{C}ontrol expression to evaluate,
  the \textit{E}nvironment binding variables to values,
  and the \textit{K}ontinuation to return to,
  each of which has many possible constructors.
For example, the following defines a CEK machine
  for a minimal language of list construction and destruction:

\[\begin{array}{lcllll}
x & \in & \Variable \\
v & \in & \Value & \Coloneqq & \Nil \mid \Cons~v~v \\
c & \in & \Control & \Coloneqq & x \mid \Nil \mid \Cons~c~c \mid \Match~c~[\Nil \rightarrow c]~[\Cons~x~x \rightarrow c] \\
e & \in & \Env & \Coloneqq & \overline{x:=v} \\
k & \in & \Kont & \Coloneqq & \OK \mid \Cons\_~c~e~k \mid \Cons~v~\_~k \mid \Match\_~[\Nil \rightarrow c]~[\Cons~x~x \rightarrow c]~e~k \\
s & \in & \State & \Coloneqq & \Eval~c~e~k \mid \Apply~v~k
\end{array}\]

Note that the $\Apply$ states,
  while non-standard,
  are folklore in the literature
  and ensure that the step relation is defined
  entirely by a fixed set of rewrite rules.
In effect, these extra states unroll recursive traversals
  into multiple atomic steps.%
\footnote{
The rules for $\Eval~x$ seem to rely on the inequality
  between $x$ and $y$;
  however, in a real program,
  only a finite number of variables are present,
  so these two rules can be expanded to a
  finite number of pure rewrite rules.
}

Although the language in this case is minimal,
  CEK machines can be defined for most functional languages. 
The execution of a CEK machine program is defined by
  a step relation $s \mapsto s'$ over states.
Critically, in a CEK machine,
  the entire step relation is defined by
  a finite number of simple patterns:
\[\begin{array}{lcl}
    \Eval~x~[y := v; e]~k & \mapsto & \Eval~x~e~k \\
    \Eval~x~[x := v; e]~k & \mapsto & \Apply~v~k \\[6pt]

    \Eval~\Nil~e~k & \mapsto & \Apply~V\Nil~k \\
    \Eval~(\Cons~c_1~c_2)~e~k & \mapsto & \Eval~c_1~e~(\Cons~\_~c2~e~k) \\
    \Apply~v_1~(\Cons~\_~c_2~e~k) & \mapsto & \Eval~c_2~e~(\Cons~v_1~\_~k) \\
    \Apply~v_2~(\Cons~v_1~\_~k) & \mapsto & \Apply~(\Cons~v_1~v_2)~k \\[6pt]

    \Eval~(\Match~c_1~\cdots)~e~k & \mapsto & \Eval~c_1~e~(\Match~\_~\cdots~e~k) \\
    \Apply~\Nil~(\Match~\_~[\Nil \rightarrow c1]~\cdots~e~k) & \mapsto & \Eval~c1~e~k \\
    \Apply~(\Cons~v_1~v_2)~(\Match~\_~\cdots~[\Cons~x_1~x_2 \rightarrow c_2]~e~k) & \mapsto & \Eval~c_2~[x_1 := v_1; x_2 := v_2; e]~k
\end{array}\]

There is no rule for applying a value to the $\OK$ continuation
  or for an $\Eval~x$ in an empty environment;
  the former represents successful program termination
  and the latter represents an error state.

\subsection{Rewrite Rules}
More generally, \name's input is
  a programming language definition
  given by a sort \Constructor~of constructors
  and a set of rewrite rules
  that define its reduction semantics.
For the CEK machine,
  the set of constructors must include each constructor contained in the definitions of
  \Control, \Value, \Env, \Kont, and \State,
  while the set of rewrite rules is exactly the set displayed above.
We define a language of terms, 
  each of which is a constructor applied to a (possibly empty) sequence of arguments:
\[\begin{array}{lcllll}
k & \in & \Constructor \\
t & \in & \Term & \Coloneqq & k~\overline{t}
\end{array}\]

Rewrite rules over these terms are defined by patterns,
  which are terms where some subterms
  are replaced with \textit{pattern variables}:

\[\begin{array}{lcllll}
x & \in & \PatternVar \\
p & \in & \Pattern & \Coloneqq & k~\overline{p} \mid x
\end{array}\]

We assume that the set of pattern variables $\PatternVar$ is infinite. In particular, we assume that there is an injection from $\mathbb{N}$ to $\PatternVar$, and a bijection between $\PatternVar$ and $\PatternVar + \PatternVar$ (where $+$ denotes disjoint union). These operations are used to generate fresh variables. 
A reduction rule then consists of two patterns,
  a right hand side and a left hand side, written $p \mapsto p$: 

\[\begin{array}{lcllll}
r & \in & \Rule & \Coloneqq & p \mapsto p
\end{array}\]

We make the additional assumption that every rule is \textit{functional}, that is, that every variable appearing on its right hand side (RHS) 
  also appears on its left hand side (LHS). 
The input to \name is
  a sort of constructors $\Constructor$ and
  a set of rules $R\subseteq \Rule$.
These rules define the semantics of the language via
  a single inference rule
\begin{equation}
\label{eq:therule}
\inferrule{
        r = p_1 \mapsto p_2\\ 
        \sigma[p_1] = t_1\\
        \sigma[p_2] = t_2\\
    }{
        t_1 \overset{r}{\mapsto} t_2
    }{}
\end{equation}

$t_1 \overset{r}{\mapsto} t_2$ is pronounced
  ``$t_1$ steps to $t_2$ by $r$''.
The inference rule states that
  one term steps to another by some reduction rule
  if there exists a substitution $\sigma$
  that can be applied to each side of the reduction rule
  to obtain each term.
Note that for the second and third premises to make sense,
  we choose to consider terms as special patterns:
  those with no pattern variables. 

Terms $t_1$ that do not step to any other term by any rule
  are considered final, or terminating, program states.
If $t_1 \overset{r}{\mapsto} t_2$ and $r\in R$,
  we write $t_1 \overset{R}{\mapsto} t_2$,
  and define $\overset{R}{\mapsto}^*$ as
  the reflexive transitive closure of $\overset{R}{\mapsto}$.
We say that $t_1$ evaluates to $t_2$ under $R$ (written $t_1 \overset{R}{\Rightarrow} t_2$)
  iff $t_1 \overset{R}{\mapsto}^* t_2$,
  and $t_2 \not\overset{R}{\mapsto} t_3$ for every $t_3$.

The assumptions of \Cref{eq:therule}
  refer to substitutions, which we now define.

\subsection{Substitution and Unification}

Substitutions are defined as mappings from pattern variables to patterns. 
It will be convenient for the rest of this section to consider terms 
  to be just patterns containing no pattern variables 
  and to reason generally about patterns rather than 
  specifically about terms. 
Substitutions can be applied to patterns (written $\sigma[p]$) 
  by replacing each pattern variable with its corresponding pattern. 

\[\begin{array}{lcllll}
\sigma & \in & \Substitution & := \PatternVar \rightarrow \Pattern\\[6pt]

&& \sigma[k~\overline{p}]&= k~\sigma[\overline{p}]\\
&& \sigma[x]&= \sigma(x)\\
&& \sigma[p \overline{p}]&= \sigma[p] \sigma[\overline{p}]\\
&& \sigma[\cdot]&= \cdot
\end{array}\]

We now define unifier and most general unifiers for a pair of patterns. 
Note that we consider pairs of substitutions to unify pairs of patterns, rather than the more standard notion of a single substitution 
  unifying some set of constraints. 

\begin{definition}
    $\sigma_1 \circ \sigma_2$ is defined such that $(\sigma_1 \circ \sigma_2)(x) = \sigma_1[\sigma_2(x)]$ for all $x$.
\end{definition}

\begin{definition}
    $\sigma_1 \sqsubseteq \sigma_2$ iff $\sigma_1 = \sigma_3 \circ \sigma_2$ for some $\sigma_3$.
\end{definition}

\begin{definition}
    $(\sigma_1, \sigma_2)$ \textit{unifies} a pair of patterns $(p_1, p_2)$ iff $\sigma_1[p_1] = \sigma_2[p_2]$. 
\end{definition}

\begin{definition}
    $(\sigma_1, \sigma_2)$ is a \textit{most general unifier} for a pair of patterns $(p_1, p_2)$ iff $(\sigma_1, \sigma_2)$ unifies $(p_1, p_2)$ and for all $(\sigma_1', \sigma_2')$ that unifies $(p_1, p_2)$, $\sigma_1' \sqsubseteq  \sigma_1$ and $\sigma_2' \sqsubseteq  \sigma_2$. 
\end{definition}

The notion of the most general unifier is critical to defining the composition of rules.

\subsection{Shortcuts}
\label{sec:shortcut-theory}

Now that substitution and unification are defined,
  we can introduce the concept of shortcuts,
  which allow multiple reduction steps to be combined into one.
Consider a sequence of two reduction steps:

\[
t_1 \overset{r_1}{\mapsto} t_2 \overset{r_2}{\mapsto} t_3
\]

We desire a new ``shortcut'' reduction rule $r$
  such that $t_1 \overset{r}{\mapsto} t_3$.
Of course, the trivial rule $r = t_1 \mapsto t_3$
  has this property,
  but only applies when reducing a term identical to $t_1$.
We seek instead a more general rule,
  which we call the composition
  of $r_1$ and $r_2$ (written $r_1 \circ r_2$),
  which depends only on $r_1$ and $r_2$
  and not on the specific terms $t_1$, $t_2$, and $t_3$.
This more general rule may apply to many future program states,
  not just $t_1$,
  and will accelerate their execution by combining two steps into one. 

We define rule composition by the following inference rule. 


\[
\inferrule{
        (\sigma_1, \sigma_2)\text{ is a most general unifier for }(p_2, p_3)\\
    }{
        (p_1 \mapsto p_2) \circ (p_3 \mapsto p_4) = \sigma_1[p_1] \mapsto \sigma_2[p_4] 
    }{}
\]

The use of the equal sign in this judgment form is not obviously justified, since $\sigma_1[p_1] \mapsto \sigma_2[p_4]$ may not be uniquely determined by $p_1 \mapsto p_2$ and $p_3 \mapsto p_4$. To justify our notation and our treatment of rule composition as a (partial) function, we have proven that rule composition respects an equivalence relation on rules. We provide only a sketch of the proof for this theorem, as for all theorems in this paper, with the full mechanized proof available in the accompanying Agda artifact. 

\begin{definition}
    Two rules are \textit{equivalent} iff they are the same up to renaming of variables. Formally, $p_1\mapsto p_2 \equiv p_1'\mapsto p_2'$ iff there exist $\sigma_1$ and $\sigma_2$ such that $\sigma_1[p_1] = p_1'$, $\sigma_1[p_2] = p_2'$, $\sigma_2[p_1'] = p_1$, and $\sigma_2[p_2'] = p_2$. 
\end{definition}

\begin{theorem}[Compatibility of Rule Composition]
    If $r_1\equiv r_1'$, $r_2\equiv r_2'$, $r_1\circ r_2 = r$, and $r_1'\circ r_2'\equiv r'$, then $r\equiv r'$.
\end{theorem}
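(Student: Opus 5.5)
We reduce to a direct statement about most general unifiers. Write $r_1 = p_1 \mapsto p_2$, $r_2 = p_3 \mapsto p_4$, $r_1' = p_1'\mapsto p_2'$, $r_2' = p_3'\mapsto p_4'$. The hypothesis $r_1\circ r_2 = r$ gives a most general unifier $(\sigma_1,\sigma_2)$ of $(p_2,p_3)$ with $r = \sigma_1[p_1]\mapsto\sigma_2[p_4]$. We read the hypothesis on $r'$ as: $r'$ is equivalent to a composite $\sigma_1'[p_1']\mapsto\sigma_2'[p_4']$ for some most general unifier $(\sigma_1',\sigma_2')$ of $(p_2',p_3')$. Since $\equiv$ is transitive (compose the witnessing substitutions), it suffices to show $r \equiv \sigma_1'[p_1']\mapsto\sigma_2'[p_4']$. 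The statement is symmetric in the primed and unprimed data, so we only need to exhibit one substitution carrying $r$ to the primed composite; the reverse substitution comes from the same argument with roles swapped.

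\textbf{Step 1: auxiliary lemmas.} We first prove, by induction on patterns:
\begin{enumerate}
\item[(a)] $\sigma[\tau[p]] = (\sigma\circ\tau)[p]$.
\item[(b)] If $\mu[p]=\nu[p]$, then $\mu$ and $\nu$ agree on every variable of $p$.
\item[(c)] If $\mu$ and $\nu$ agree on the variables of $p$, then $\mu[p]=\nu[p]$.
\item[(d)] The variables of $\sigma[p]$ are exactly the variables of $\sigma(x)$ for $x$ a variable of $p$.
\end{enumerate}
Next, let $\tau$ witness $r_1\equiv r_1'$, so that $\tau[p_1]=p_1'$ and $\tau[p_2]=p_2'$. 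Let $\rho$ witness $r_2\equiv r_2'$, so that $\rho[p_3]=p_3'$ and $\rho[p_4]=p_4'$.

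\textbf{Step 2: the primed unifier, transported, unifies the unprimed pair.} By (a),
\[
(\sigma_1'\circ\tau)[p_2] = \sigma_1'[p_2'] = \sigma_2'[p_3'] = (\sigma_2'\circ\rho)[p_3].
\]
So $(\sigma_1'\circ\tau,\sigma_2'\circ\rho)$ unifies $(p_2,p_3)$. Most generality of $(\sigma_1,\sigma_2)$ then yields $\mu,\nu$ with $\sigma_1'\circ\tau = \mu\circ\sigma_1$ and $\sigma_2'\circ\rho = \nu\circ\sigma_2$. Using (a) again,
\[
\sigma_1'[p_1'] = \mu[\sigma_1[p_1]] \quad\text{and}\quad \sigma_2'[p_4'] = \nu[\sigma_2[p_4]].
\]

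\textbf{Step 3: reconcile the two witnesses (the main obstacle).} The definition of most general unifier supplies two unrelated witnesses $\mu$ and $\nu$, whereas $\equiv$ needs a single substitution applied to both sides of $r$. The plan is to show that $\nu[\sigma_2[p_4]] = \mu[\sigma_2[p_4]]$.
\begin{itemize}
\item Using Step 2 and (a),
\[
\mu[\sigma_1[p_2]] = \sigma_1'[p_2'] = \sigma_2'[p_3'] = \nu[\sigma_2[p_3]] = \nu[\sigma_1[p_2]].
\]
By (b), $\mu$ and $\nu$ agree on the variables of $\sigma_1[p_2]=\sigma_2[p_3]$.
\item Since $r_2$ is functional, the variables of $p_4$ are among those of $p_3$. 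By (d), the variables of $\sigma_2[p_4]$ are among those of $\sigma_2[p_3]$.
\item By (c), $\nu[\sigma_2[p_4]] = \mu[\sigma_2[p_4]]$.
\end{itemize}
Hence $\mu$ maps $\sigma_1[p_1]\mapsto\sigma_2[p_4]$ onto $\sigma_1'[p_1']\mapsto\sigma_2'[p_4']$.

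\textbf{Step 4: the reverse direction.} Repeat Steps 2 and 3 with the primed and unprimed data exchanged, using $\tau'$ and $\rho'$, the most generality of $(\sigma_1',\sigma_2')$, and the functionality of $r_2'$. This gives the reverse substitution and completes the proof of $r\equiv r'$.

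We expect Step 3 to be the delicate point, especially in the mechanization: it is the only place where functionality of rules is used, and it needs the variable-agreement lemmas (b)--(d).
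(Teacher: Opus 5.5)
Your proof is correct, but it is organized differently from the paper's. The paper argues semantically. It interprets each rule as a partial function on patterns and proves two lemmas: two rules have the same interpretation iff they are equivalent, and if $r_1\circ r_2=r$ then the interpretation of $r$ is the composite (first $r_1$, then $r_2$) of the interpretations of $r_1$ and $r_2$. Compatibility then follows from uniqueness of partial-function composition, and associativity up to $\equiv$ follows from the same lemmas at no extra cost.

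You stay syntactic and build the renaming witness explicitly. You transport the primed unifier along $\tau,\rho$, factor it through $(\sigma_1,\sigma_2)$ as $(\mu\circ\sigma_1,\nu\circ\sigma_2)$, and reconcile $\mu$ and $\nu$ on the variables of $\sigma_2[p_4]$. That reconciliation is exactly the work in the harder half of the paper's second lemma, which shows that an $r_1$-step followed by an $r_2$-step is a single step of $r_1\circ r_2$ with the same result. In effect you inline it for the one instance you need.

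What your route buys:
\begin{itemize}
\item It is short and self-contained.
\item It pinpoints where functionality is used: only for $r_2$ and $r_2'$. The latter follows from the former, since $r_2'=\rho[p_3]\mapsto\rho[p_4]$ and your lemma (d) applies.
\item It correctly notices that the paper's definition of most general unifier gives two independent factor substitutions, which must be reconciled.
\end{itemize}

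What the paper's route buys is reuse. The interpretation lemmas immediately give associativity up to equivalence. Redoing associativity in your style would need a separate, messier argument about most general unifiers of already-composed rules.
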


\newcommand{\interp}[1]{\llbracket #1\rrbracket}

\begin{proof}[Proof Sketch]
    The theorem is proven by first interpreting each rule $r$ as a partial function $\interp{r}$ from patterns to patterns, then proving various properties of this interpretation: that $\interp{r_1} = \interp{r_2}$ iff $r_1\equiv r_2$ and that if $r_1\circ r_2 = r$, then $\interp{r_1}\circ \interp{r_2} = \interp{r}$. These properties can be combined to show that since partial function composition is unique, rule composition is unique up to equivalence. The same idea yields a proof that since partial function composition is associative, rule composition is associative up to equivalence. 
\end{proof}

We have proven two additional properties of rule composition
  that are important for \name:
  that the relevant rule compositions exist,
  and that the reduction ruleset can be
  safely extended to include them.
The first is~\Cref{theorem:existence},
  which guarantees that compositions exist
  for consecutive rules in a reduction sequence. The proof of this theorem is constructive, so the proof also provides an algorithm for computing the composition of rules. 

\begin{theorem}[Shortcut Existence]
    \label{theorem:existence}
    If $t_1 \overset{r_1}{\mapsto} t_2 \overset{r_2}{\mapsto} t_3$, then $r_1\circ r_2$ exists.    
\end{theorem}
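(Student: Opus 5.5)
Write $r_1 = p_1 \mapsto p_2$ and $r_2 = p_3 \mapsto p_4$. By the definition of rule composition, it suffices to show that the pair $(p_2, p_3)$ has a most general unifier $(\sigma_1,\sigma_2)$. The composite is then $\sigma_1[p_1]\mapsto\sigma_2[p_4]$.

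Unfolding the two steps $t_1 \overset{r_1}{\mapsto} t_2$ and $t_2 \overset{r_2}{\mapsto} t_3$ via \Cref{eq:therule} gives substitutions $\tau_1,\tau_2$ with $\tau_1[p_2] = t_2 = \tau_2[p_3]$. Hence $(\tau_1,\tau_2)$ unifies $(p_2,p_3)$, so at least one unifier exists. What remains is the classical fact that existence of a unifier implies existence of a most general one. I will prove it constructively so that it doubles as the composition algorithm.

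\textbf{Step 1: separate the variables.} Since $p_2$ and $p_3$ may share variable names but are unified by independent substitutions, I first rename apart. Using the bijection $\PatternVar \cong \PatternVar + \PatternVar$, tag the variables of $p_2$ with $\mathrm{inl}$ and those of $p_3$ with $\mathrm{inr}$. This yields $p_2', p_3'$ over disjoint variables. A pair $(\sigma_1,\sigma_2)$ unifies $(p_2,p_3)$ iff the single substitution $\sigma$ defined by $\sigma(\mathrm{inl}\,x)=\sigma_1(x)$ and $\sigma(\mathrm{inr}\,x)=\sigma_2(x)$ satisfies $\sigma[p_2']=\sigma[p_3']$. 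I will check that this correspondence transports $\sqsubseteq$ componentwise, so that a most general single unifier of the equation $p_2' \doteq p_3'$ yields a most general pair.

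\textbf{Step 2: run unification.} I will use a Robinson / Martelli–Montanari style algorithm on a finite list of equations. The transformations are:
\begin{itemize}
\item decompose $k\,\overline{p}\doteq k\,\overline{q}$ into the argument equations;
\item fail on mismatched constructors or arities;
\item delete $x\doteq x$;
\item fail on the occurs check $x\doteq p$ with $x$ occurring properly in $p$;
\item otherwise eliminate $x$ by substituting $p$ for it throughout.
\end{itemize}
Termination is by the lexicographic measure (number of unsolved variables, total size of the equations). The invariant is that the set of unifiers of the current system, composed with the accumulated substitution, equals the set of unifiers of the original system. From this invariant, success yields an idempotent most general unifier: any unifier $\theta$ factors as $\theta = \theta\circ\sigma$. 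Failure yields no unifier, which contradicts $(\tau_1,\tau_2)$. So the algorithm succeeds.

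\textbf{Main obstacle.} I expect the hardest part to be establishing the invariant and the termination measure formally, since in Agda these proofs are notoriously fiddly. A cleaner alternative is the structural algorithm of McBride, which terminates by recursion on the number of variables in scope. I would try that first, exploiting the fact that only finitely many variables occur in $p_2'$ and $p_3'$.

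A secondary nuisance is that $\sqsubseteq$ is stated for total substitutions on all of $\PatternVar$. Variables not occurring in $p_2$ and $p_3$ must therefore be handled: the mgu sends them to themselves, and $\sigma_3$ is extended to agree with the given unifier on them. In this way $\sigma_1' = \sigma_3\circ\sigma_1$ holds globally, not just on $\mathrm{vars}(p_2)$.
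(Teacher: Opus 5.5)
Your proof is correct, but it reaches the key lemma (a unifier implies a most general unifier) by a different route from the paper. You rename apart via $\PatternVar\cong\PatternVar+\PatternVar$, which turns the pair problem into a single equation $p_2'\doteq p_3'$. You then run textbook Robinson/Martelli--Montanari unification, with an occurs check and a lexicographic termination measure, and use $(\tau_1,\tau_2)$ only to exclude the failure branches.

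The paper instead stays with pairs of substitutions throughout, so the two sides' variables never interact and no renaming apart is needed. Starting from the identity pair, it resolves each variable-versus-constructor disagreement by binding the variable to that constructor applied to fresh variables, one layer at a time. It maintains the invariant that every unifier factors through the current pair. Termination follows because the constructor count grows at each step yet is bounded by that of the known common instance (here $t_2$). Once only variable--variable disagreements remain, the paper reads off the mgu from equivalence classes of a ``unified variables'' relation, rather than eliminating variables one by one.

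So the paper uses the hypothesis essentially: it supplies a simple numeric termination bound and makes the occurs check unnecessary, which is convenient in Agda. The cost is that its procedure need not terminate on non-unifiable inputs, and it needs the separate equivalence-class construction. Your route does not depend on the hypothesis (it actually decides unifiability) and can lean on existing mechanizations such as McBride's. Its price is the transport lemma, the occurs-check size argument, and the lexicographic measure.

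Your route also gives a slightly stronger conclusion, since a single $\rho$ factors both components. Your ``secondary nuisance'' disappears as well: idempotence gives $\theta=\theta\circ\sigma$ as total substitutions, so $\rho=\theta$ works with no extension.
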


\begin{proof}[Proof Sketch]
    The core of the proof is the lemma that if a pair of patterns $(p_1,p_2)$ has a unifier, then it has a most general unifier. This most general unifier is obtained by iteratively building up a pair of substitutions $(\sigma_1, \sigma_2)$ that any unifier must factor through, until this pair is also a unifier. $\sigma_1$ and $\sigma_2$ are initially both the identity substitution. At each step, $\sigma_1[p_1]$ and $\sigma_2[p_2]$ are compared. If they are equal, we are done. Otherwise, find a location where they disagree. If this location has a pattern variable $x$ in one of the two patterns (say $\sigma_1[p_1]$, without loss of generality) and a constructor term $k~\overline{p}$ in the other, then we update our substitution pair by extending $\sigma_1$ with a substitution mapping $x$ to $k~\{x_i\}_{i<n}$ for distinct fresh variables $x_i$, where $n$ is the length of $\overline{p}$. Every unifier must factor through these new substitutions because every unifier must assign $x$ to some pattern with the same constructor $k$ and the same number of children $n$. From here the unification process continues to iterate. This makes progress towards termination because the number of constructors in $\sigma_1[p_1]$ or $\sigma_2[p_2]$ increases at each step, but each cannot exceed the number of constructors in the original unifying pattern because every unifier factors through $(\sigma_1, \sigma_2)$. 

    Now consider the case where there is no point of disagreement between $\sigma_1[p_1]$ and $\sigma_2[p_2]$ of the aforementioned form: a variable corresponding to a constructor. They cannot disagree with two different constructors or different numbers of children at corresponding points, because then they could not have a unifier. We therefore have reduced the problem to the case where the two patterns disagree only at variable occurrences. We construct our most general unifier using the inductively defined notion of ``unified variables.'' Two variables $x_1$ and $x_2$ are unified if either $x_1 = x_2$, or if they appear in, say, $\sigma_1[p_1]$ in positions corresponding to variables $y_1$ and $y_2$ in $\sigma_2[p_2]$ and $y_1$ and $y_2$ are unified. Our most general unifier is the pair $(\sigma_1', \sigma_2')$ such that each substitution assigns unified variables to the same variable and non-unified variables to distinct variables. In other words, each maps injectively out of equivalence classes of variables under the ``unified'' relation. The second constraint defining the $\sigma_1'$ and $\sigma_2'$ is that if $x$ appears in $\sigma_1[p_1]$ in a position corresponding to variable $y$ in $\sigma_2[p_2]$, then $\sigma_1'(x) = \sigma_2'(y)$. The proof that this is the most general unifier follows from using the inductive notion of variable unification, yielding the finest partition of variables that would admit a unification of the patterns. 
\end{proof}


The second is~\Cref{theorem:validity},
  which states that the set of rules can be extended
  with compositions of its constituents
  without affecting program semantics.

\begin{definition}
    Rule sets $R$ and $R'$ are \textit{equivalent} iff they evaluate the same inputs to the same outputs. Formally, $R\equiv R'$ iff for all $t_1$ and $t_2$, $t_1 \overset{R}{\Rightarrow} t_2$ iff $t_1 \overset{R'}{\Rightarrow} t_2$.
\end{definition}

\begin{theorem}[Shortcut Validity]
    \label{theorem:validity}
    If $r_1, r_2\in R$ and $r_1\circ r_2$ exists,
    then $R \cup \{r_1\circ r_2\}\equiv R$.
\end{theorem}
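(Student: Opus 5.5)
The plan is to reduce \Cref{theorem:validity} to a single-step soundness lemma for composed rules. Let $R' = R \cup \{r_1\circ r_2\}$. The lemma says: if $t \overset{r_1\circ r_2}{\mapsto} t'$, then there is a term $t''$ with $t \overset{r_1}{\mapsto} t'' \overset{r_2}{\mapsto} t'$. With that lemma, the equivalence $R'\equiv R$ follows from two observations about the definition of $\overset{R}{\Rightarrow}$:
\begin{enumerate}
\item $R$ and $R'$ generate the same reflexive transitive closure.
\item $R$ and $R'$ have the same final terms.
\end{enumerate}
Only the direction from the composite rule back to its two constituents is needed; no completeness property of the most general unifier enters, beyond the fact that it unifies.

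For the lemma, write $r_1 = p_1\mapsto p_2$ and $r_2 = p_3\mapsto p_4$. Let $(\sigma_1,\sigma_2)$ be the most general unifier of $(p_2,p_3)$ used to form $r_1\circ r_2 = \sigma_1[p_1]\mapsto\sigma_2[p_4]$. Suppose $t \overset{r_1\circ r_2}{\mapsto} t'$ via a substitution $\sigma$, so that $\sigma[\sigma_1[p_1]] = t$ and $\sigma[\sigma_2[p_4]] = t'$. Define $t'' = \sigma[\sigma_1[p_2]]$. Since $(\sigma_1,\sigma_2)$ unifies $(p_2,p_3)$, we also have $t'' = \sigma[\sigma_2[p_3]]$.

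Using the routine fact $\sigma[\tau[p]] = (\sigma\circ\tau)[p]$, proved by induction on $p$, the step $t\overset{r_1}{\mapsto} t''$ is witnessed by $\sigma\circ\sigma_1$. Likewise, $t''\overset{r_2}{\mapsto} t'$ is witnessed by $\sigma\circ\sigma_2$. One side condition remains: $t''$ must be a term, i.e.\ contain no pattern variables. This uses functionality of $r_1$. Every variable of $p_2$ occurs in $p_1$, so every variable of $\sigma_1[p_2]$ occurs in $\sigma_1[p_1]$. Because $\sigma$ maps $\sigma_1[p_1]$ to the closed term $t$, it sends all those variables to closed patterns, and hence $t''$ is closed.

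With the lemma in hand, the two observations go as follows.
\begin{enumerate}
\item Since $R\subseteq R'$, every $\overset{R}{\mapsto}^*$ sequence is an $\overset{R'}{\mapsto}^*$ sequence. Conversely, each $r_1\circ r_2$ step in an $R'$-sequence can be replaced by two $R$-steps, by induction on the length of the sequence.
\item If $t$ is $R'$-final, it is $R$-final, again because $R\subseteq R'$. Conversely, if $t$ is $R$-final but $t\overset{r_1\circ r_2}{\mapsto} t'$, the lemma gives a step $t\overset{r_1}{\mapsto}t''$ with $r_1\in R$, a contradiction.
\end{enumerate}
Combining the two, $t_1\overset{R}{\Rightarrow}t_2$ iff $t_1\overset{R'}{\Rightarrow}t_2$.

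I expect the main obstacle to be the bookkeeping inside the lemma rather than any conceptual step. The bookkeeping consists of the associativity of substitution application and the closedness argument for $t''$, which is where functionality of the rules is essential. In the mechanization I would first prove $\sigma[\tau[p]] = (\sigma\circ\tau)[p]$. I would then prove that variables of $\tau[p]$ are contained in the variables of $\tau[q]$ whenever $\mathrm{vars}(p)\subseteq \mathrm{vars}(q)$. The rest is direct.
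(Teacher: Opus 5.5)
Your proposal is correct and follows essentially the same route as the paper. The paper's key lemma is the same as yours: a step by $r_1\circ r_2$ factors into an $r_1$-step followed by an $r_2$-step, and it is used both to replace composite steps in reduction sequences and to show that $R$-final terms stay final under $R\cup\{r_1\circ r_2\}$; your explicit witnesses $\sigma\circ\sigma_1$ and $\sigma\circ\sigma_2$, and your closedness argument for the intermediate term via functionality of $r_1$, fill in details that the paper's sketch leaves to the Agda mechanization.
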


\begin{proof}[Proof Sketch]
The core of the proof is the lemma that if $r_1\circ r_2$ justifies a step from $t_1$ to $t_2$, then $t_1$ can also step to $t_2$ by two steps: the first justified by $r_1$ and the second by $r_2$. This ensures that any occurrence of $r_1\circ r_2$ in a reduction sequence can be replaced by reductions only involving rules in $R$, and conversely if a term cannot step in $R$ then it also cannot step using the $r_1\circ r_2$. 
\end{proof}

Together, these theorems ensure that shortcut rules across reduction sequences always exist, and that \name may incorporate them into evaluation without affecting the result of the computation. We now turn to the description of the data structures and algorithms \name uses to find and apply these shortcuts. 

\section{Design and Implementation}
\subsection{The \name language}
Although our theory of shortcuts is generic
  over languages and reduction rule sets, 
  we instantiate the \name language as a functional, 
  first order programming language
  with Hindley-Milner types. 
The formal syntax of \name includes the following:

\[\begin{array}{lcllll}
x & \in & \Variable \\
C & \in & \Constructor \\
D & \in & \DataType \\
F & \in & \Function \\
i & \in & \Int \\
t & \in & \Type & \Coloneqq & \Int \mid D~\overline{t} \\
v & \in & \Value & \Coloneqq & i \mid C~\overline{v} \\
c & \in & \Control & \Coloneqq & x \mid \Let~x~=~c~\In~c \mid i \mid c~+~c \mid 
C~\overline{c} \mid \Match~c~\overline{[C~\overline{x}\rightarrow c]} \mid
F~\overline{c} \\
e & \in & \Env & \Coloneqq & \overline{x:=v} \\
k & \in & \Kont & \Coloneqq & \OK \mid \Let~x~=~\_~\In~c~k \mid C~\overline{v}~\_~e~\overline{c}~k \mid \Match~\_~\overline{[C~\overline{x}\rightarrow c]}~e~k \mid \\ 
& & & & \Add~\_~c~e~k \mid \Add~v~\_~k \mid F~\overline{v}~\_~e~\overline{c}~k  \\
d & \in & \Definition & \Coloneqq & D~\overline{x} = \overline{C~\overline{t}} \mid F~\overline{x} = c \\
s & \in & \State & \Coloneqq & \Eval~c~e~k \mid \Apply~v~k
\end{array}\]

This language is a generalization of the abstract machine language described in~\Cref{sec:abstract-machine}, allowing the user to define new data types and functions.
The semantics is straightforward:

\[\begin{array}{lcl}
    \Eval~(\Let~x~=~c_1~\In~c_2)~e~k & \mapsto & \Eval~c_1~e~(\Let~x~=~\_~\In~c_2~e~k) \\
    \Apply~v~(\Let~x~=~\In~c_1~e~k) & \mapsto & \Eval~c_1~[x~:=~v]~e~k \\
    \Eval~x~[y := v; e]~k & \mapsto & \Eval~x~e~k \\
    \Eval~x~[x := v; e]~k & \mapsto & \Apply~v~k \\[6pt]

    \Eval~C~e~k & \mapsto & \Apply~C~k \\
    \Eval~(C~c_1~\overline{c_2})~e~k & \mapsto &
    \Eval~c_1~e~(C~\_~e~\overline{c_2}~k) \\
    \Apply~v_2~(C~\overline{v_1}~\_~e~c_1~\overline{c_2}~k) & \mapsto &
    \Eval~c_1~e~(C~\overline{v_1}~v_2~\_~e~\overline{c_2}~k) \\
    \Apply~v_2~(C~\overline{v_1}~\_~e~k) & \mapsto & 
    \Apply~(C~\overline{v_1}~v_2)~k \\[6pt]

    \Eval~(\Match~c_1~\overline{[C~\overline{x}\rightarrow~c_2]})~e~k & \mapsto & 
    \Eval~c_1~e~(\Match~\_~\overline{[C~\overline{x}\rightarrow~c_2]}~e~k) \\ \Apply~(C_1~\overline{v_1})~(\Match~\_~[C_2~\cdots]~\overline{[C_3~\overline{x_3}\rightarrow~c_3]}~e~k) & \mapsto & \Apply~(C_1~\overline{v_1})~(\Match~\_~\overline{[C_3~\overline{x_3}\rightarrow~c_3]}~e~k) \\
    \Apply~(C_1~\overline{v_1})~(\Match~\_~[C_1~\overline{x_1} \rightarrow c_1]~\cdots~e~k) & \mapsto & \Eval~c_1~(\overline{v_1 := x_1};e)~k \\[6pt]

    \Eval~i~e~k & \mapsto & \Apply~i~k \\
    \Eval~(c_1 + c_2)~e~k & \mapsto & \Eval~c_1~e~(\Add~\_~c_2~e~k) \\
    \Apply~i_1~(\Add~\_~c_2~e~k) & \mapsto & \Eval~c_2~e~(\Add~i_1~\_~k) \\
    \Apply~i_2~(\Add~i_1~\_~k) & \mapsto & \Apply~(i_1 + i_2)~k \\[6pt]
    
    \Eval~F~e~k & \mapsto & \Eval~F.c~k \\
    \Eval~(F~c_1~\overline{c_2})~e~k & \mapsto & \Eval~c_1~e~(F~\_~\overline{c_2}~e~k) \\
    \Apply~v_2~(F~\overline{v_1}~\_~e~c_1~\overline{c_2}~k) & \mapsto &
    \Eval~c_1~e~(F~\overline{v_1}~v_2~\_~e~\overline{c_2}~k) \\
    \Apply~v_2~(F~\overline{v_1}~\_~e~k) & \mapsto & 
    \Eval~(F.c~\overline{v_1}~v_2)~e~k
\end{array}\]

The input to \name is a sequence of \Definition{}s, each of which defines an algebraic data type or a function.
The \name compiler then compiles these definitions to an OCaml interface file, which exports the corresponding type definitions, alongside the optimized function definitions on those types. These functions are executed using shortcut memoization, so repeated calls will obtain speedup from caching.

\subsection{Compilation}
Applying shortcut memoization directly to this semantics,
  however, would have unacceptable overhead due to excessive branching (every step inspects the form of the program state) and linear environment traversal for variable instantiation. The \name compiler applies various optimizations to address these performance obstacles. 

The compiled program represents the program state as a control-environment-continuation record,
  with $\Apply$ represented as
  a special control state with a single value in the environment.
The control is represented by a kind of program counter, an integer index into an array of steps, each of which is a function from program state to program state.
The environment is represented as a dynamic array of values that program variables index into, and the continuation is represented as a stack.
This substantially reduces the overhead of interpretation.

We also compact the program so that each execution step contains multiple basic instructions. Sequences of variable lookups, let definitions,
  constructor applications,
  and integer manipulations
  are combined into a single step, since these operations never allocate continuation frames. Instead of intermediate values being stored as continuations, they are pushed onto and popped from the environment.
Only calls to non-tail-recursive functions require pushing a frame onto the continuation stack. At these steps, unused variables are removed from the continuation~\cite{safe-for-space}, which reduces the number of variables in learned shortcuts and makes shortcuts faster to apply. 
Constructors and pattern matching in \name are compiled to the corresponding OCaml constructs, which provide more efficient matching than the linear scan implied by the \name language reduction rules. 
These optimizations not only decrease execution time, 
  but also make each primitive rewrite rule larger, decreasing the amount of rule manipulations needed. 

Our compiled representation is thus more like a stack or register machine than a pure CEK machine, and enables more efficient execution due to its resemblance to actual hardware.

\subsection{Running \name}

To execute a function in this compiled representation,
  \name sets up an abstract machine
  with the control pointing to the beginning of the function,
  the environment containing the function arguments,
  and the continuation initialized to a special \Done~value.
\name then runs in a loop until the abstract machine
  attempts to \Apply \ a value to the \Done \ continuation, 
  at which point \name can extract the result from the environment.
In the loop,
  \name first searches for a learned shortcut
  applicable to the current abstract machine state.
If one is found, it uses that shortcut
  to rewrite the abstract machine state;
  if one is not found, such as when a new arithmetic operation must be performed, it instead generates
  the applicable atomic rewrite rule
  and applies that.
These atomic rewrites are generated on demand
  to model non-rewrite-based operations
  like arithmetic on machine integers.
To generate atomic rewrites,
 \name executes the program instructions normally, but marks
  any environment examined or continuation slot as used.
\name then constructs an LHS and RHS pattern
  from the initial and final program state,
  using variables for any unmarked slots
  and, for marked slots, instantiating the pattern with
  constructors until it reaches an unmarked value.

\subsection{Shortcut Learning}
\label{sec:learn}

After applying an atomic or shortcut rewrite,
  \name may compose it with a previous rewrite
  to learn a new shortcut.
Since any sequence of successive rules can be composed,
  there are $O(n^2)$ possible shortcuts
  that can be derived from a length reduction sequence $n$.
Thus, generating all possible shortcuts would take
  longer than actually running the program.
\name is therefore more selective
  about which shortcuts to derive and keep,
  and uses various optimizations
  to store shortcuts compactly
  and search for applicable shortcuts quickly.

To prevent an explosion in the number of shortcuts,
  \name separates shortcuts into ``levels'',
  where rules of the same levels have
  arch of the same length in \Cref{fig:shortcut-before},
  and only compose adjacent pairs of the same level.
This limits the number of shortcuts learned
  to be linear in the length of the reduction sequence,
  while still learning some very large
  (and thus very beneficial) shortcuts.
Furthermore, in this set of shortcuts,
  no shortcut skips past the start of another.
This critically allows shortcut application to be greedy;
  applying a shortcut at the current program state
  is unlikely to skip over some intermediate state
  where a better, longer shortcut would be available.

\begin{figure}[htbp]
\begin{subfigure}{0.9\textwidth}
\centering
\begin{tikzpicture}[
  node/.style={draw, circle, minimum size=6mm},
  edge/.style={-Latex},
]

\node[node] (n0) at (0,0) {A};
\node[node] (n1) at (1,0) {B};
\node[node] (n2) at (2,0) {C};
\node[node] (n3) at (3,0) {D};
\node[node] (n4) at (4,0) {E};
\node[node] (n5) at (5,0) {F};
\node[node] (n6) at (6,0) {G};
\node[node] (n7) at (7,0) {H};

\draw[edge] (n0) -- node[above]{} (n1);
\draw[edge] (n1) -- node[above]{} (n2);
\draw[edge] (n2) -- node[above]{} (n3);
\draw[edge] (n3) -- node[above]{} (n4);
\draw[edge] (n4) -- node[above]{} (n5);
\draw[edge] (n5) -- node[above]{} (n6);
\draw[edge] (n6) -- node[above]{} (n7);

\draw[bend left=60] (n0) to (n2);
\draw[bend left=60] (n2) to (n4);
\draw[bend left=60] (n4) to (n6);
\draw[bend left=60] (n0) to (n4);

\end{tikzpicture}
\caption{
  The shortcuts obtained
    after executing seven rewrite rules.
  The horizontal arrows are
    atomic reduction steps,
    while the bowed arcs are
    shortcut rewrite rules obtained by rule composition. }
\Description{
  A shortcut hierarchy
    over a linear execution trace from A to H.
  Horizontal arrows denote
   atomic reduction steps.
  Curved arrows denote learned
    shortcuts formed by composing
    adjacent steps of the same level.}
\label{fig:shortcut-before}
\end{subfigure}

\begin{subfigure}{0.9\textwidth}
\centering
\begin{tikzpicture}[
  node/.style={draw, circle, minimum size=6mm},
  edge/.style={-Latex},
]

\node[node] (n0) at (0,0) {A};
\node[node] (n1) at (1,0) {B};
\node[node] (n2) at (2,0) {C};
\node[node] (n3) at (3,0) {D};
\node[node] (n4) at (4,0) {E};
\node[node] (n5) at (5,0) {F};
\node[node] (n6) at (6,0) {G};
\node[node] (n7) at (7,0) {H};
\node[node] (n8) at (8,0) {I};

\draw[edge] (n0) -- node[above]{} (n1);
\draw[edge] (n1) -- node[above]{} (n2);
\draw[edge] (n2) -- node[above]{} (n3);
\draw[edge] (n3) -- node[above]{} (n4);
\draw[edge] (n4) -- node[above]{} (n5);
\draw[edge] (n5) -- node[above]{} (n6);
\draw[edge] (n6) -- node[above]{} (n7);
\draw[edge] (n7) -- node[above]{} (n8);

\draw[bend left=60] (n0) to (n2);
\draw[bend left=60] (n2) to (n4);
\draw[bend left=60] (n4) to (n6);
\draw[bend left=60] (n6) to (n8);

\draw[bend left=60] (n0) to (n4);
\draw[bend left=60] (n4) to (n8);

\draw[bend left=60] (n0) to (n8);

\end{tikzpicture}
\caption{The shortcuts after the addition of another reduction step from H to I, because now the number of steps is a power of two. Three new shortcuts are added: one from G to I, one from E to I, and finally one from A to I.}
\Description{The same shortcut hierarchy after adding one more execution step
  from H to I. New shortcuts are added from G to I, from E to I, and from A to
  I, illustrating incremental maintenance of the hierarchy.}
\label{fig:shortcut-after}
\end{subfigure}
\end{figure}
The hierarchical structure of the shortcut tree
  makes it easy to maintain throughout evaluation, starting out empty and growing with each successive reduction step.
\Cref{fig:shortcut-after} illustrates how
  the shortcut set in~\Cref{fig:shortcut-before}
  grows with the addition of another reduction step.
In the figure,
  the atomic reduction step from H to I
  is composed with its preceding step from G to H,
  forming a shortcut from G to I.
Since there is a preceding shortcut at the same level,
  from E to G, these two are composed
  to form a shortcut from E to I,
  and the process repeats again to form
  an even higher-level shortcut from A to I.
Shortcuts are added continuously during evaluation
  and immediately become available for use.
This means that a shortcut derived
  by composing rules at the beginning of an evaluation
  can be used as a single step further along in the evaluation.
This can accelerate even a single execution of a program
  if that execution repeatedly executes
  similar programs on similar data,
  such as during recursion.

\subsection{Discrimination Tree}
\label{sec:lookup}

This set of learned shortcuts must be indexed
  for efficient search.
More precisely, at each execution step,
  \name must find a reduction rule
  applicable to the current program state.
There may be multiple such rules, in which case \name prefers to apply the rule with the greatest ``length''. The length of a rule is the number of atomic reduction steps it skips. This quantity is stored alongside each rule and is easily maintained via $\text{length}(r_1\circ r_2) = \text{length}(r_1) + \text{length}(r_2)$. 

The set of available reduction rules, including both atomic rules and learned shortcuts, is maintained in a \textit{discrimination tree}, a data structure commonly used in theorem provers. The discrimination tree is a trie storing the left-hand sides (LHS) of all known shortcuts or atomic steps. Since a trie's keys are sequences, each LHS is flattened into its preorder traversal. Additionally, since reduction rule lookup should be invariant under alpha equivalence (variable renaming), reduction rules are represented using De Bruijn indices. Each variable on the LHS is a nameless binder (a special nullary constructor), and each variable on the RHS is a number indexing which LHS binder it corresponds to. This approach is only possible if the rewrite rules are \textit{linear} in the sense of having no repeated variables on the LHS, which is true of the CEK abstract machine. 

In the trie data structure, each key is a complete path through the tree, and each key's value is stored at that path's leaf. Our trie is a \textit{radix tree}, 
  meaning that each node stores a sequence of elements rather than a single element. 
The sequence at each node is as long as possible,
  calculated using a longest common prefix algorithm. The consequence of this is that at a given node, every child's sequence begins with a different element, which the parent uses to index that child.
In \name, the children are stored in
  an association list or hash table,
  depending on their number.%
\footnote{
When the number of children does not exceed the threshold, the overhead of the hashtable is not worth it.
}
An example is shown in~\Cref{fig:discrimination-tree}.

\name finds the longest reduction rule
  that matches the current program state by flattening the program state into its preorder traversal and matching this sequence into the trie. Matching proceeds by traversing the trie from the root, sequentially matching nodes in the trie with segments in the flattened program state. When a variable is encountered in the trie, it matches the prefix of the flattened program state corresponding to a complete subterm.
This traversal is branching, since the program state can match both a constructor in the trie and a variable in the tree, leading to different children. The branching traversal reaches some set of leaves, and the rule with the highest length among these leaves is returned.

\begin{figure}[t]
\centering
\includegraphics[width=0.5\linewidth]{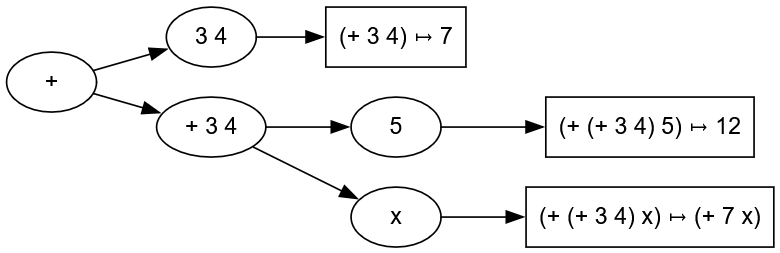}
\caption{A discrimination tree holding three rewrite rules.} 
\Description{A discrimination tree storing three rewrite rules. Internal
nodes contain constructor prefixes, and leaves contain complete rewrite
rules. The example shows how common prefixes are shared across rules.}
\label{fig:discrimination-tree}
\end{figure}
\section{Optimization}
\label{sec:opt}
Several optimizations are necessary
  to ensure that shortcuts can be
  looked up and applied quickly enough
  to actually speed up program execution.

\subsection{Search Pruning}
Searching the discrimination tree is expensive
  because there may be many shortcuts
  that match a given program state;
  in fact, our shortcut learning strategy essentially guarantees this.
To reduce search time,
  we use a branch-and-bound algorithm.
Every node in the discrimination tree
  stores the length of the longest rule in its subtree.
During traversal of the discrimination tree, \name maintains the length of the longest rule found so far in the search and skips subtrees without a greater maximum rule length.
Additionally, branches with a longer maximum rule length are searched first in an attempt to maximize pruning.

\subsection{Fast Rule Matching}
\label{subsec:data-rep}
As more complex shortcuts are learned, 
  the rules become correspondingly more complex. 
In a standard data representation for terms and rules,
  the cost of applying a rule is proportional
  to the size of the LHS and RHS.
A syntactically large shortcut might then provide
  no meaningful speedup over direct computation.
\name therefore uses an alternative encoding of rewrite rules
  in which pattern matching is dominated instead by the number of variables.

Recall that the discrimination tree stores
  the prefix traversals of the LHSs of learned shortcuts.
To search it efficiently,
  the program state is also stored as a string,
  its pre-order traversal;
  we call the resulting string the ``program state string''.
Searching the discrimination tree
  for the program state string requires only
  testing that the program state starts with a fixed string
  to match constant strings,
  and finding full subterms in the program state string
  to match variables.
To accelerate these two steps,
  the program state string is stored in
  an augmented balanced binary tree.%
\footnote{
  Constant strings in the discrimination tree
    use the same balanced binary tree representation
    to speed up common prefix checks needed
    to update the discrimination tree.
}
\footnote{
  This tree is implemented as an optimized finger tree,
    which generally provides a better constant
    and algorithmic bound.
}
Both substring matching and full term matching
  take $O(\log n)$ time on this binary tree,
  which makes applying a rule with $k$ variables
  take $O(k\log n)$ time.
The augmentations accelerate string and subterm matching
  using monoidal parsing and monoidal hashing.
Applying rules also involves constructing new nodes,
  which requires concatenating
  subterms and constant strings for each shortcut's RHS.
Balanced binary trees also support
  $O(\log n)$ time concatenation,
  so applying rules is also fast,
  with a time complexity dependent on variable count
  but not rule size.%
\footnote{
  Each variable ends up bound to
    a substring of the program state string
    containing $O(\log n)$ subtrees,
    so applying a rule requires concatenating
    $O(k \log n)$ subtrees,
    which takes $O(k \log n \log \log n)$ time.
    What is important is not the exact asymptotics,
    but the independence of rule size.
}
Inserting new rules to the discrimination tree
  is likewise sped up.

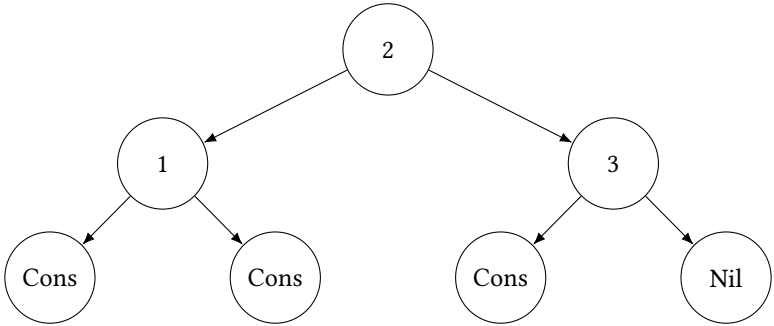
\begin{figure}[t]
\centering
\begin{tikzpicture}[
  node/.style={draw, circle, minimum size=12mm},
  edge/.style={-Latex}
]

\node[node] (n0) at (0,0) {2};

\node[node] (n1) at (-3,-1.5) {1};
\node[node] (n2) at ( 3,-1.5) {3};

\node[node] (n3) at (-4.5,-3) {Cons};
\node[node] (n4) at (-1.5,-3) {Cons};
\node[node] (n5) at ( 1.5,-3) {Cons};
\node[node] (n6) at ( 4.5,-3) {Nil};

\draw[edge] (n0) -- (n1);
\draw[edge] (n0) -- (n2);

\draw[edge] (n1) -- (n3);
\draw[edge] (n1) -- (n4);

\draw[edge] (n2) -- (n5);
\draw[edge] (n2) -- (n6);

\end{tikzpicture}
\caption{\name's internal term representation, which allows efficient wrapping and unwrapping of multiple constructors in a combined operation. In this example, wrapping the term in $Cons(0,\_)$ prepends two constructor nodes ($Cons$ and $0$), and the binary tree structure provides a logarithmic time append operation.}
\Description{A balanced binary tree representation of a term string. Internal
  nodes summarize concatenated subsequences, and leaves store constructors such
  as Cons and Nil. The figure illustrates efficient wrapping and concatenation
  operations.}
\end{figure}

\paragraph{Monoidal Parsing}
Monoidal parsing can quickly find a full subterm
  beginning at a given index in the program state string.
In monoidal parsing,
  each constructor of arity $a$ is assigned a degree $1 - a$.
This quantity represents
  the constructor's contribution to
  the total number of subterms in a sequence;
  it contributes $+1$ for the subterm it constructs
  and contributes $-a$ for the $a$ children it requires.
For example, every nullary constructor has degree $1$
  and contributes its own subterm,
  while every ternary constructor has degree $-2$
  and combines three subterms into one.
The complete subterm in $S$ that begins at a given index $i$
  is then the smallest $j$ such that
  the summing of degrees in $S[i:j]$ produces $1$,
  which corresponds to exactly one subterm.
Then $j$ can be found efficiently using
  the binary tree structure.
Each binary tree node is augmented with two values:
  the sum of degrees in the sequence it stores ($\TotalDegree$)
  and the maximum $\TotalDegree$ across
  all prefixes of the sequences it stores, $\MaxDegree$.
Each prefix of the sequence corresponds to a forest of subtrees 
  in the binary tree. 
The $\TotalDegree$ and $\MaxDegree$ in each root of this forest 
  can be used to calculate the $\MaxDegree$ of the entire prefix. 
Since $\MaxDegree$ does not decrease across prefixes in the sequence, 
  this method can be used to find the shortest prefix 
  with $\MaxDegree = 1$ by binary search.

\paragraph{Monoidal Hashing}
A monoidal hash function is a hash function on sequences $h$
  with a concatenation operator $\oplus$ on hash values
  such that $h(x {+\!\!+} y) = h(x) \oplus h(y)$.
This means that the hash of a sequence
  can be built by concatenating the hashes of the elements. 
Every node in the binary tree stores
  its length and the monoidal hash of its contents.
To match constant strings,
  length information is first used
  to reduce the constant string to
  binary tree nodes $O(\log n)$,
  which are monoidally hashed together and compared
  to the monoidal hash of the constant string.

\name uses the Castagnoli CRC (CRC32C)
  for its monoidal hash,
  which is implemented in x86 SSE4.2.
The monoidal hash concatenation operation
  is implemented by storing, for each node,
  both the checksum $p$
  and the length-dependent factor $x^n \bmod P$.
The checksum of the composition
  is then just the combination of the two checksums
  and the length-dependent factor.
Our implementation leverages hardware support:
  hardware CRC instructions for byte-wise folding
  and carry-less multiplication
  (\texttt{PCLMULQDQ} on x86\_64, \texttt{PMULL} on AArch64)
  for monoidal hash combination.

\subsection{Instantiation}

Because the cost of matching and applying shortcuts
  is now determined by the number of pattern variables,
  \name prefers to learn shortcuts with fewer variables.
This is achieved by
  limiting the shortcuts to a maximum of ten pattern variables
  in the continuation and in each entry in the environment.
If unification results in more pattern variables for the composed rule,
  the extra variables are instead instantiated with
  their values in the current program state.
More specifically, in both the continuation and in each environment entry,
  the rightmost ten variables are kept,
  while the remainder are instantiated.
The rightmost ten variables are preferred because
  many common data structures (like lists)
  are imbalanced to the right,
  and because function applications,
  which often involve constant functions,
  are on the left.
\section{Evaluation}
\newcommand{\hazelListLength}{400}
\label{sec:eval}
To evaluate \name, 
  we devised two benchmarks to test different aspects of the system: 
\begin{itemize}
    \item[RQ1.] Can \name accelerate evaluation across program changes?
    \item[RQ2.] Can \name accelerate evaluation with repetitive patterns of computation?
\end{itemize}

\subsection{RQ1: Can \name accelerate evaluation across program changes?}

Hazel is a purely functional, call by value programming language
  that supports live program through its ``hole'' construct.
The Hazel system allows users to edit programs, 
  and re-evaluates the partial program on every keystroke,
  showing users live results as they type.
We implemented a (untyped) Hazel interpreter in \name
  and used it to evaluate traces, captured by Hazel,
  of three students completing functional programming exercises.
Each trace contains the entire program after every keystroke;
  we execute all programs in the trace, just as Hazel would,
  while retaining the learned shortcuts.
We ignore duplicate programs,
  which \name would otherwise speed up enormously.
We compare the evaluation time between \name
  and a variant of \name that does not
  learn or apply shortcuts.

The benchmark consisted of eight programs of increasing size and complexity, which we asked our participants to implement based on a specification. 
The programs were:
  \begin{itemize}
      \item Basic list functions (\texttt{map}, \texttt{filter}, \texttt{append}).
      \item \texttt{pair}, which takes a list as input and returns the list of adjacent pairs in the input list, requiring non-structural recursion.
      \item \texttt{reverse}, which is implemented with an \texttt{append} helper function.
      \item Functions for quick sort, merge sort, and insertion sort.
  \end{itemize}

The 24~total benchmark traces contain
  a total of \hazelPointCount~recorded programs.
Each experiment was run in a fresh cache
  on Ubuntu 24.04.4 with an Intel i7-8700K CPU
  and 32 GiB of memory.

The overall speedup across all traces is \hazelTotalSpeedup.
This speed-up is significantly affected
  by the size of the input list used.
All results in this section runs programs
  on input lists with \hazelListLength~elements;
  since \name provides asymptotic speedups
  for repeated computations,
  longer lists would provide even bigger speedups.
We view \name's ability to extract speedups
  even for relatively short lists
  of a few hundred elements as a strength.

\Cref{fig:benchmark} shows individual evaluation times
  as a scatterplot and \Cref{fig:cdf} shows
  the cumulative distribution function of speedups
  for each data point in the benchmark.
Each point's horizontal position in \Cref{fig:benchmark} is its evaluation time
  without learned shortcuts,
  and each point's vertical position is its evaluation time
  with learned shortcuts.
The heavy diagonal line plots the $y = x$ equal-time curve;
  points below it are cases where incremental evaluation is faster.
Most points are below the line, often substantially---%
  note that the plot uses a log scale---%
  indicating that incremental evaluation often provides a large speedup.

The scatterplot contains many vertical bands,
  as well as a single horizontal band. 
Vertical bands arise from cases
  where the computation is essentially unchanged between edits,
  resulting in an approximately constant baseline time
  but more efficient incremental computation
  as shortcuts are learned across multiple executions.
The horizontal band arises from
  repetitive computation in which memoization renders
  the incremental time insensitive to variations in programs.
  
An aggregate view of the data,
  focusing purely on the speedup over baseline,
  is shown in \Cref{fig:cdf}. It show three significant bands:
  one at a roughly $2\times$ slowdown,
  affecting roughly $6\%$ of programs,
  and another at a roughly $25\times$ speedup,
  affecting roughly $70\%$ of them.
Focusing on slowdowns,
  these primarily involve recursive computations,
  where \name spends time recording shortcuts
  that are not applicable in future iterations due to changed inputs.
The largest slowdowns are roughly $8\times$.
In these cases,
  the user has edited the program in such a way
  that requires wholly new computations never performed before,
  for example, when a function is called for the first time. 
In such cases, \name pays an upfront cost:
  it spends time learning new shortcuts
  that will not be applied until later programs in the trace.
The overhead of this process
  is responsible for the slowdown on these points.
Fortunately, most of these slowdowns
  appear at the beginning of the trace,
  when programs are largely incomplete and run quickly.

The net speedups for each interpreted function
  are recorded in \Cref{tab:individual}.
Speedups vary depending on the user 
  due to coding style differences.
For example, users differed in whether
  they implemented the base case or the inductive case first,
  and in what order they placed the main function
  and the helper functions.
Two of Cam's\footnote{Not a real name} programs included infinite loops,
  so they could not be assigned speedup factors.%
\footnote{
An interesting side effect of shortcuts is that
  they can detect some infinite loops:
  a shortcut LHS $\mapsto$ RHS where the LHS matches the RHS
  guarantees non-termination.
}
We also measured
  the proportional time spent by each component of \name,
  recorded in \Cref{tab:component}.
Rule lookup and rule composition take the most time
, since the former requires branching
 and the latter executes complex unification logic.
These results justify the optimizations used
  to speed up lookup.

\Cref{tab:individual} also shows
  memory overhead for each function. 
The overhead is quite high,
  sometimes even reaching $100\times$.
More complex functions have a higher overhead,
  as the evaluation is longer, and thus more shortcuts are learned.
The number is high mainly because
  no cache eviction policy is currently implemented in \name.
Like in any cache,
  rules in discrimination tree
  have wildly different hit counts,
  and we expect even simple cache eviction policies
  to dramatically reduce memory overhead.

\begin{figure}[htbp]
\begin{subfigure}[t]{0.45\textwidth}
    \centering
    \includegraphics[width=1.0\textwidth]{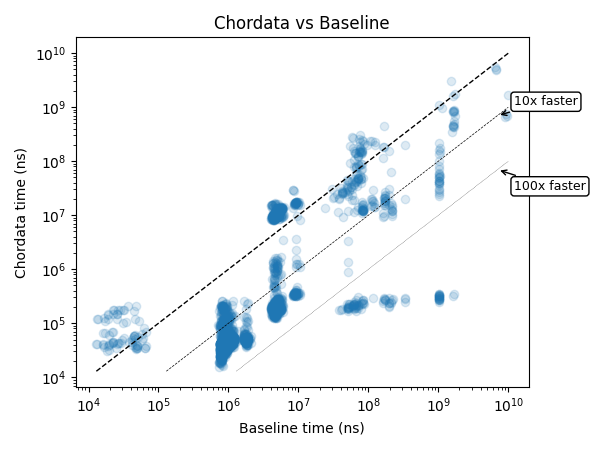}
    \caption{Scatterplot of incremental vs. baseline evaluation time.}
    \label{fig:benchmark}
\end{subfigure}
\hfill
\begin{subfigure}[t]{0.45\textwidth}
    \centering
    \includegraphics[width=1.0\textwidth]{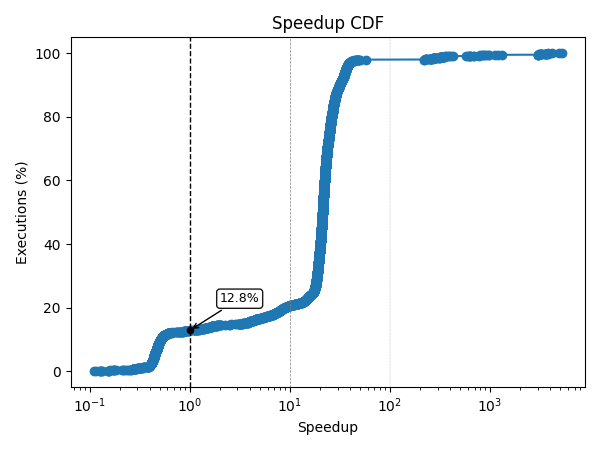}
    \caption{CDF of per-state speedups.}
    \label{fig:cdf}
\end{subfigure}
\caption{Results for the live-programming benchmark.}
\Description{Two plots for the live-programming benchmark. The left plot is a
  scatterplot comparing incremental evaluation time with baseline evaluation
  time for each program state. The right plot is a cumulative distribution of
  observed speedups.}
\end{figure}  

\begin{table}[htbp]
\centering
\begin{subtable}[t]{0.70\linewidth}
  \centering
  \hazelSpeedupTable
  \caption{Speedup and memory overhead by benchmark.}
  \label{tab:individual}
\end{subtable}
\begin{subtable}[t]{0.20\linewidth}
  \centering
  \hazelSpeedBreakdown
  \caption{Time breakdown by component.}
  \label{tab:component}
\end{subtable}
\caption{Detailed results for the live-programming benchmark.}
\label{tab:hazel-results}
\end{table}

\subsection{RQ2: Can \name accelerate evaluation with repetitive patterns of computation?}
The Hazel evaluation covers our
  expected use case for \name:
  live programming where similar programs are
  repeatedly evaluated and latency is critical.
However, during the course of that evaluation,
  we noticed that \name would sometimes speed up
  a single program evaluation with a blank cache.
This is because the program might itself generate
  repetition computation,
  such as by transforming an input into
  a more repetitive form.
(For example, merge sort produces
  many sorted lists of similar sizes!)
To cleanly demonstrate this phenomenon,
  we performed a second case study
  with a symbolic differentiator and an algebraic simplifier,
  which differentiates a mathematical expression twice
  and then simplifies the result.
The simplification procedure
  reduces additions and multiplications
  with their respective identities,
  combines constants appearing in
  the same sequence of additions or multiplications,
  and factors out common factors within sequences of additions. 
This process is iterated until a fixed point is reached.
Symbolic differentiation frequently
  results in similar subterms with repeated structure
  (repetition across space),
  whose differentiation and simplification
  can then be accelerated with shortcuts.
The iterative nature of simplification, meanwhile,
  creates repeated structures across iterations
  (repetition across time).
Since each simplification changes
  only a part of the expression,
  processing the unchanged parts of the expression
  is sped up by shortcuts learned during the previous iteration.

\begin{figure}
    \centering
    \includegraphics[width=0.5\linewidth]{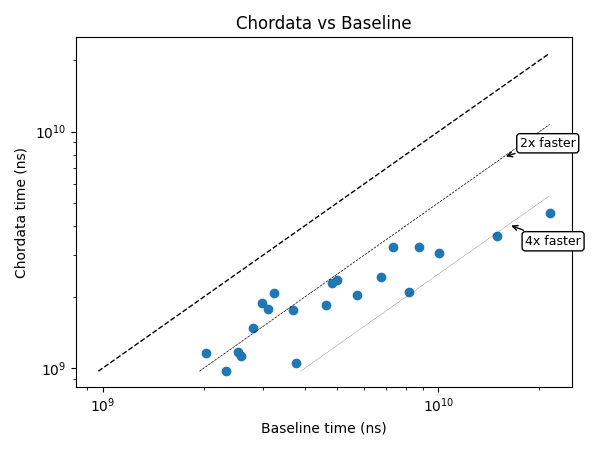}
    \caption{The scatterplot for the algebraic simplification benchmark}
    \Description{A scatterplot for the algebraic simplification benchmark
  comparing incremental execution time with baseline execution time across
  randomly generated inputs.}
    \label{fig:arith-scatter}
\end{figure}

Using these two kinds of repetition, 
  \name is able to obtain a \arithTotalSpeedup{} speedup,
  on randomly generated input expression,
  sized between 100 and 200 nodes.
To maximally eliminate shared structure,
  the cache is emptied between each runs.
Although this speedup is much smaller than 
  that obtained in our Hazel benchmark,
  it is still significant:
  the Hazel benchmark evaluated many similar programs
  with a \emph{shared} cache,
  while this experiment does not reuse
  caches across experiments.
Although applying \name to general computation
  will likely cause unacceptable memory overheads,
  we are excited about its ability
  to exploit repetition in input data,
  in program execution, and across edits. 
Furthermore,
  the speedup, while small,
  is obtained despite the overhead of shortcut lookup,
  and the speedup would be even bigger
  with larger or more repetitive expressions.
For example, an application that performs
  repeated differentiation,
  such as a Taylor expander,
  would create more shared structure,
  and thus achieve greater speedups.
\section{Related Work}
\paragraph{Self-Adjusting Computation}
Self-adjusting computation \cite{SAC} and Adapton \cite{adapton}
  are two prominent incremental computing systems
  that record dependencies between values
  in a computation graph.
They then use graph algorithms to recompute
  a minimal subset of the graph
  when inputs to a computation change.
However, changes to the computation graph
  are difficult for such systems to process.
Some later systems are hybrid~\citet{Acar2003AdaptiveM}, 
  relying on tight interplay between memoization and computation graphs to
  maintain identities and register changes;
  for example, ``nominal Adopton''~\cite{nominal-adapton}
  uses user-directed \emph{names}
  as stable identifiers in the computation graph.
Moreover, the design of the computation graph
  often restricts what incremental computations are supported.
For example, self-adjusting computation
  does not naturally support incrementalization
  across reordered or shared subcomputations,
  as it maintains a global linear order on computations
  using an order-maintenance data structure. 
In Adapton, a related limitation
  arises from the phase separation
  between dirtying nodes and recomputing values.
This design can lead to unnecessary recomputations
  that a more demand-driven evaluation strategy would avoid.

\paragraph{Incremental Lambda Calculus}
Other works like 
  the incremental lambda calculus~\cite{incremental-lc},
  patch Reconciliation~\cite{patch-reconciliation},
  and differential dataflow~\cite{differential-dataflow}
  propagate changes to program values along a dependency graph.
For example,
  in differential dataflow~\cite{differential-dataflow}
  program values are bags,
  and changes are additions and deletions of bag items.
Incremental lambda calculus~\cite{incremental-lc}
  provides a unifying framework
  to track such changes
  on arbitrary lambda calculus programs.
However,
  like the computational graph approaches,
  it only responds to changes
  and thus also suffers from the distal similarity problem.

\paragraph{Memoization}
Memoization speeds up function execution
  by recording the input/output pairs of the function
  and reusing the previous output 
  if it matches the corresponding previous input,
  thereby skipping the computation.
A problem with memoization is its limited generality:
  recording every previous input to a memoized computation
  does not help if new values differ slightly.
Shortcut memoization is a type of memoization, 
  but improves the performance and generality 
  of traditional memoization by learning
  generalizable rules instead of simple input-output pairs.
Dependency tracking schemes~\cite{precise-dependencies, caching-dependencies}
  similarly improve memoization
  by identifying portions of the input
  that the memoized function does not depend on.
Shortcut memoization's use of unification
  is in some sense this sort of dependency tracking scheme.

Another problem for memoization,
  especially on recursive functions
  is that naive memoization must traverse the spine of the input
  to locate and resolve changes, which can be slow.
Approaches like function caching~\cite{function-caching}
  introduce ``chunky decomposition'',
  an algorithm to convert arbitrary lists
  into balanced binary trees,
  with similar lists sharing a tree structure.
This reduces the length of the spine
  and improves memoization performance.
\name uses a similar idea
  to speed up lookup in its discrimination tree,
  but generalizes it beyond lists
  using monoidal hashing and parsing.
  
\paragraph{Partial Evaluation}
Partial evaluation and staging are compilation techniques
  that partition function inputs into those
  known at compile time and those known at run time.
The function is then evaluated as much as possible,
 utilizing the compile time portion,
  generating a residual program that evaluates the runtime input.
\citet{Sundaresh1992IncrementalCV} 
 explores the relationship between
 partial evaluation and incremental computing
 by building an algorithm that incrementalizes arbitrary programs,
 with the user specifying which inputs are fixed at compile time
 and which can vary during runtime.
\name shares some similarities with partial evaluation,
  shortcut LHSs having both fixed inputs and pattern variables.
A key difference is that typical partial evaluators
``push'' while memoization typically ``pulls''.
In partial evaluation, that is,  some inputs are specified,
 and the function is reduced as far as possible at compile time,
 with a residual program generated for a separate runtime stage.
In \name, shortcuts are learned and composed,
  with the input dependencies discovered via unification,
  at run time.

\paragraph{Incremental Program Analysis}
Incremental techniques present substantial opportunities in program analysis, with a range of potential applications. One prominent application is integration with integrated development environments (IDEs), where analysis results must be updated responsively following each program modification to notify the user of potential issues in the current program.

Prior work on incremental program analysis has been developed within several distinct analysis frameworks, including fixpoint iteration~\cite{ryder1983incremental,arzt2014reviser,PollockS89, nougrahiya2025incidfa}, Datalog-based analysis~\cite{szabo2016inca,szabo2021incremental,ZhaoSRS21}, and algebraic program analysis~\cite{zhou2025incremental}. However, because these approaches rely on different underlying analysis frameworks, the resulting techniques tend to be heterogeneous and tightly coupled to the specific framework in which they are formulated. Besides the heterogeneity in frameworks, the sources of increments are also diverse: (1). The input program might change, and the IDE might require responsive analysis results, so the input could be incrementalized. (2). During fixpoint computation typical in program analysis, it is possible to record what has changed and propagate only those changes instead of recomputing everything~\cite{Quiring2024DerivingWD}. (3). Authors like \citet{Zhu2004SymbolicPA} have pointed out that different variables in the program might share many analysis results in common, and the similarity can be exploited by methods such as BDDs~\cite{NaikAW06}. \name doesn't specifically focus on program analysis, but might be applicable to the same class of problems.

\paragraph{Hazel}
Hazel has many subsystems all of which face similar
  latency constraints imposed by interactivity and liveness,
  and thus could be incrementalized for better performance.
Such subsystems include typechecking~\cite{hazel-OM} and 
  syntax error correction~\cite{hazel-EC},
  and \name might be applicable to such subsystems
  to improve incrementality. 
  Recent work on incremental typechecking in Hazel approached the problem using a dependency graph based approach~\cite{hazel-OM},
    but it may be the case that a hybrid approach that propagates along the dependency graph but also stores and recalls from a cache could be more effective when programs are being edited in the manner we observed. 

\paragraph{Succinct Data Structures}
Succinct data structures represent repetitive data compactly;
 they generally rely on splitting the input into chunks,
 storing computation results in a lookup table,
 and reusing those results, somewhat similar to \name.
Moreover, \name's data representation 
  for fast rule matching and application
  borrow ideas from succinct representation,
  especially succinct trees
  encoding a preorder traversal with degree information
  into a string.
Authors like \citet{Gawrychowski2015OptimalDS},
  \citet{Sadakane2007SuccinctDS}, and
 \citet{He2010SuccinctRO} 
  have explored similar string representations.
For example,
  \cite{Gawrychowski2015OptimalDS}
 encodes a family of strings via
  straight line grammars,
  which is an acyclic context free grammar
  where each nonterminal has a single production rule,
 and thus each nonterminal represents exactly one string.
This encoding can thus compress and deduplicate identical substrings,
 reducing the total memory consumption of \name,
  as well as removing the probabilistic reliance on hashing.
We hope to explore succinct data structure ideas further in \name.

\paragraph{Term Indexing}
Discrimination trees \cite{McCuneWilliam} belong to a family
  of data structures and algorithms used
  in automated theorem proving and logic programming.
This general term indexing family is used
  to efficiently store and retrieve rewrite rules
  applicable to a term,
  and other widely-used techniques include
 substitution trees~\cite{stree}, context trees~\cite{ctree}, and path indexing~\cite{McCuneWilliam}.
The combination of discrimination trees,
  monoidal parsing, and monoidal hashing used in \name is,
  to the best of our understanding, novel,
  and may be useful in other term indexing contexts.
\section{Conclusion and Future Work}
Shortcut memoization is an approach to incremental computing 
  that composes and applies rewrite rules
  to speed up repetitive execution.
Our notion of rule composition,
  which gives the most general yet semantically valid shortcut,
  is mechanized in Agda including proofs of all theorems.
We then introduced \name,
  a purely functional programming language
  that compiles into an efficient abstract machine representation
  which can learn shortcuts by observing executions,
  store those shortcuts in a compact discrimination tree,
  and apply those shortcuts to future, similar program states.
This discrimination tree is accelerated
  using a custom representation of terms
  augmented by monoidal parsing and monoidal hashing.
Our evaluation shows that \name achieves
  a speedup of \hazelTotalSpeedup~
  on a set of program traces from a live programming setting.
Additionally,
  in a symbolic differentiation case study,
  \name is able to exploit shortcuts
  \emph{within a single execution},
  obtaining a speedup of \arithTotalSpeedup~
  thanks to repetitive program structures.

In future work, we hope to enrich the input to \name
  with \textit{theories}, by allowing equations to be specified
  in addition to reduction rules.
For example, the commutativity of addition
  could be incorporated into the rule matching procedure,
  resulting in more cases where a shortcut is applicable.
This may benefit from the use of e-graphs~\cite{egraph}.
\name's data structures could also be improved
  by experimenting with succinct data structures
  and term indexing data structures,
  resulting in larger speedups.
Finally, we plan to add a cache eviction policy
  and experiment with persistent caches,
  resulting in less memory and larger speedups
  across executions.

\section*{Data-Availability Statement}
We will submit our artifact for artifact evaluation. 
Our artifact contains an Agda mechanization of our main theorems, 
  the implementation of \name and our evaluation case studies. 
Upon executing the artifact, 
  the evaluator should see that Agda formally verifies our theorems
  and also generate a web page with plots from the evaluation 
  (albeit with slightly different numbers 
  due to the difference in execution environments).

\bibliographystyle{ACM-Reference-Format}
\bibliography{refs}

@inproceedings{tanimoto,
author = {Tanimoto, Steven L.},
title = {A perspective on the evolution of live programming},
year = {2013},
isbn = {9781467362658},
publisher = {IEEE Press},
booktitle = {Proceedings of the 1st International Workshop on Live Programming},
pages = {31–34},
numpages = {4},
location = {San Francisco, California},
series = {LIVE '13}
}

@inproceedings{szabo2016inca,
  title={Inca: A dsl for the definition of incremental program analyses},
  author={Szab{\'o}, Tam{\'a}s and Erdweg, Sebastian and Voelter, Markus},
  booktitle={Proceedings of the 31st IEEE/ACM International Conference on Automated Software Engineering},
  pages={320--331},
  year={2016}
}

@inproceedings{NaikAW06,
  author       = {Mayur Naik and
                  Alex Aiken and
                  John Whaley},
  title        = {Effective Static Race Detection for {Java}},
  booktitle    = {{ACM} {SIGPLAN} Conference on Programming
                  Language Design and Implementation},
  pages        = {308--319},
  publisher    = {{ACM}},
  year         = {2006},
  url          = {https://doi.org/10.1145/1133981.1134018},
  doi          = {10.1145/1133981.1134018},
  bibsource    = {dblp computer science bibliography, https://dblp.org}
}

@article{zhou2025incremental,
  title={An Incremental Algorithm for Algebraic Program Analysis},
  author={Zhou, Chenyu and Fang, Yuzhou and Wang, Jingbo and Wang, Chao},
  journal={Proceedings of the ACM on Programming Languages},
  volume={9},
  number={POPL},
  pages={1934--1961},
  year={2025},
  publisher={ACM New York, NY, USA}
}

@article{PollockS89,
  author       = {Lori L. Pollock and
                  Mary Lou Soffa},
  title        = {An Incremental Version of Iterative Data Flow Analysis},
  journal      = {{IEEE} Trans. Software Eng.},
  volume       = {15},
  number       = {12},
  pages        = {1537--1549},
  year         = {1989},
  url          = {https://doi.org/10.1109/32.58766},
  doi          = {10.1109/32.58766},
  bibsource    = {dblp computer science bibliography, https://dblp.org}
}

@article{nougrahiya2025incidfa,
  title={IncIDFA: An efficient and generic algorithm for incremental iterative dataflow analysis},
  author={Nougrahiya, Aman and Nandivada, V Krishna},
  journal={Proceedings of the ACM on Programming Languages},
  volume={9},
  number={OOPSLA1},
  pages={617--648},
  year={2025},
  publisher={ACM New York, NY, USA}
}

@inproceedings{ZhaoSRS21,
  author       = {David Zhao and
                  Pavle Subotic and
                  Mukund Raghothaman and
                  Bernhard Scholz},
  title        = {Towards Elastic Incrementalization for Datalog},
  booktitle    = {International Symposium on Principles and Practice
                  of Declarative Programming},
  pages        = {20:1--20:16},
  publisher    = {{ACM}},
  year         = {2021},
  url          = {https://doi.org/10.1145/3479394.3479415},
  doi          = {10.1145/3479394.3479415},
  bibsource    = {dblp computer science bibliography, https://dblp.org}
}

@inproceedings{szabo2021incremental,
  title={Incremental whole-program analysis in Datalog with lattices},
  author={Szab{\'o}, Tam{\'a}s and Erdweg, Sebastian and Bergmann, G{\'a}bor},
  booktitle={Proceedings of the 42nd ACM SIGPLAN International Conference on Programming Language Design and Implementation},
  pages={1--15},
  year={2021}
}

@inproceedings{ryder1983incremental,
  title={Incremental data flow analysis},
  author={Ryder, Barbara G},
  booktitle={Proceedings of the 10th ACM SIGACT-SIGPLAN symposium on Principles of programming languages},
  pages={167--176},
  year={1983}
}

@inproceedings{arzt2014reviser,
  title={Reviser: efficiently updating IDE-/IFDS-based data-flow analyses in response to incremental program changes},
  author={Arzt, Steven and Bodden, Eric},
  booktitle={Proceedings of the 36th International Conference on Software Engineering},
  pages={288--298},
  year={2014}
}

@article{popl19,
author = {Omar, Cyrus and Voysey, Ian and Chugh, Ravi and Hammer, Matthew A.},
title = {Live functional programming with typed holes},
year = {2019},
issue_date = {January 2019},
publisher = {Association for Computing Machinery},
address = {New York, NY, USA},
volume = {3},
number = {POPL},
url = {https://doi.org/10.1145/3290327},
doi = {10.1145/3290327},
journal = {Proc. ACM Program. Lang.},
month = jan,
articleno = {14},
numpages = {32}
}

@inproceedings{differential-dataflow,
  title={Differential Dataflow},
  author={Frank McSherry and Derek Gordon Murray and Rebecca Isaacs and Michael Isard},
  booktitle={Conference on Innovative Data Systems Research},
  year={2013},
  url={https://api.semanticscholar.org/CorpusID:18593675}
}

@article{incremental-layout,
author = {Kirisame, Marisa and Wang, Tiezhi and Panchekha, Pavel},
title = {Spineless Traversal for Layout Invalidation},
year = {2025},
issue_date = {June 2025},
publisher = {Association for Computing Machinery},
address = {New York, NY, USA},
volume = {9},
number = {PLDI},
url = {https://doi.org/10.1145/3729322},
doi = {10.1145/3729322},
journal = {Proc. ACM Program. Lang.},
month = jun,
articleno = {219},
numpages = {23}
}

@article{incremental-type-checking,
author = {Porter, Thomas J. and Kirisame, Marisa and Wei, Ivan and Panchekha, Pavel and Omar, Cyrus},
title = {Incremental Bidirectional Typing via Order Maintenance},
year = {2025},
issue_date = {October 2025},
publisher = {Association for Computing Machinery},
address = {New York, NY, USA},
volume = {9},
number = {OOPSLA2},
url = {https://doi.org/10.1145/3763117},
doi = {10.1145/3763117},
journal = {Proc. ACM Program. Lang.},
month = oct,
articleno = {339},
numpages = {28}
}

@article{build-systems-ala-carte,
author = {Mokhov, Andrey and Mitchell, Neil and Peyton Jones, Simon},
title = {Build systems \`{a} la carte},
year = {2018},
issue_date = {September 2018},
publisher = {Association for Computing Machinery},
address = {New York, NY, USA},
volume = {2},
number = {ICFP},
url = {https://doi.org/10.1145/3236774},
doi = {10.1145/3236774},
journal = {Proc. ACM Program. Lang.},
month = jul,
articleno = {79},
numpages = {29}
}

@article{incremental-parsing,
author = {Wagner, Tim A. and Graham, Susan L.},
title = {Efficient and flexible incremental parsing},
year = {1998},
issue_date = {Sept. 1998},
publisher = {Association for Computing Machinery},
address = {New York, NY, USA},
volume = {20},
number = {5},
issn = {0164-0925},
url = {https://doi.org/10.1145/293677.293678},
doi = {10.1145/293677.293678},
journal = {ACM Trans. Program. Lang. Syst.},
month = sep,
pages = {980–1013},
numpages = {34}
}

@inproceedings{incremental-compilation,
author = {Sathyanathan, Patrick W. and He, Wenlei and Tzen, Ten H.},
title = {Incremental whole program optimization and compilation},
year = {2017},
isbn = {9781509049318},
publisher = {IEEE Press},
booktitle = {Proceedings of the 2017 International Symposium on Code Generation and Optimization},
pages = {221–232},
numpages = {12},
location = {Austin, USA},
series = {CGO '17}
}

@inbook{spreadsheets,
author = {Abraham, Robin and Burnett, Margaret and Erwig, Martin},
year = {2009},
month = {03},
pages = {},
title = {Spreadsheet Programming},
isbn = {9780470050118},
doi = {10.1002/9780470050118.ecse415}
}

@article{incremental-linking,
author = {Quong, Russell W. and Linton, Mark A.},
title = {Linking programs incrementally},
year = {1991},
issue_date = {Jan. 1991},
publisher = {Association for Computing Machinery},
address = {New York, NY, USA},
volume = {13},
number = {1},
issn = {0164-0925},
url = {https://doi.org/10.1145/114005.102804},
doi = {10.1145/114005.102804},
journal = {ACM Trans. Program. Lang. Syst.},
month = jan,
pages = {1–20},
numpages = {20}
}

@misc{vercel,
  title        = {Incremental Static Regeneration},
  author       = {{Vercel}},
  year         = {2026},
  howpublished = {\url{https://vercel.com/docs/incremental-static-regeneration}},
  note         = {Vercel Documentation. Last updated January 27, 2026}
}

@article{ipython,
author = {Perez, Fernando and Granger, Brian E.},
title = {IPython: A System for Interactive Scientific Computing},
year = {2007},
issue_date = {May 2007},
publisher = {IEEE Educational Activities Department},
address = {USA},
volume = {9},
number = {3},
issn = {1521-9615},
url = {https://doi.org/10.1109/MCSE.2007.53},
doi = {10.1109/MCSE.2007.53},
journal = {Computing in Science and Engg.},
month = may,
pages = {21–29},
numpages = {9}
}

@article{safe-for-space,
author = {Shao, Zhong and Appel, Andrew W.},
title = {Efficient and safe-for-space closure conversion},
year = {2000},
issue_date = {Jan. 2000},
publisher = {Association for Computing Machinery},
address = {New York, NY, USA},
volume = {22},
number = {1},
issn = {0164-0925},
url = {https://doi.org/10.1145/345099.345125},
doi = {10.1145/345099.345125},
journal = {ACM Trans. Program. Lang. Syst.},
month = jan,
pages = {129–161},
numpages = {33}
}

@article{SAC,
author = {Acar, Umut A. and Blelloch, Guy E. and Harper, Robert},
title = {Adaptive functional programming},
year = {2006},
issue_date = {November 2006},
publisher = {Association for Computing Machinery},
address = {New York, NY, USA},
volume = {28},
number = {6},
issn = {0164-0925},
url = {https://doi.org/10.1145/1186632.1186634},
doi = {10.1145/1186632.1186634},
journal = {ACM Trans. Program. Lang. Syst.},
month = nov,
pages = {990–1034},
numpages = {45}
}

@inproceedings{adapton,
author = {Hammer, Matthew A. and Phang, Khoo Yit and Hicks, Michael and Foster, Jeffrey S.},
title = {Adapton: composable, demand-driven incremental computation},
year = {2014},
isbn = {9781450327848},
publisher = {Association for Computing Machinery},
address = {New York, NY, USA},
url = {https://doi.org/10.1145/2594291.2594324},
doi = {10.1145/2594291.2594324},
booktitle = {Proceedings of the 35th ACM SIGPLAN Conference on Programming Language Design and Implementation},
pages = {156–166},
numpages = {11},
location = {Edinburgh, United Kingdom},
series = {PLDI '14}
}

@article{nominal-adapton,
author = {Hammer, Matthew A. and Dunfield, Jana and Headley, Kyle and Labich, Nicholas and Foster, Jeffrey S. and Hicks, Michael and Van Horn, David},
title = {Incremental computation with names},
year = {2015},
issue_date = {October 2015},
publisher = {Association for Computing Machinery},
address = {New York, NY, USA},
volume = {50},
number = {10},
issn = {0362-1340},
url = {https://doi.org/10.1145/2858965.2814305},
doi = {10.1145/2858965.2814305},
journal = {SIGPLAN Not.},
month = oct,
pages = {748–766},
numpages = {19}
}

@inproceedings{Acar2003AdaptiveM,
  title={Adaptive Memoization},
  author={Umut A. Acar and Guy E. Blelloch and Robert Harper},
  year={2003},
  url={https://api.semanticscholar.org/CorpusID:6661244}
}

@inproceedings{vlhcc25,
  author={Rein, Patrick and Ramson, Stefan and Beckmann, Tom and Hirschfeld, Robert},
  booktitle={2025 IEEE Symposium on Visual Languages and Human-Centric Computing (VL/HCC)}, 
  title={An Information Foraging Interpretation of Liveness}, 
  year={2025},
  volume={},
  number={},
  pages={128-138},
  doi={10.1109/VL-HCC65237.2025.00022}
}

@inproceedings{Sundaresh1992IncrementalCV,
  title={Incremental computation via partial evaluation},
  author={R. S. Sundaresh},
  year={1992},
  url={https://api.semanticscholar.org/CorpusID:209401075}
}

@inproceedings{Gawrychowski2015OptimalDS,
  title={Optimal Dynamic Strings},
  author={Paweł Gawrychowski and Adam Karczmarz and Tomasz Kociumaka and Jakub Lacki and Piotr Sankowski},
  booktitle={ACM-SIAM Symposium on Discrete Algorithms},
  year={2015},
  url={https://api.semanticscholar.org/CorpusID:1496873}
}

@article{Sadakane2007SuccinctDS,
  title={Succinct data structures for flexible text retrieval systems},
  author={Kunihiko Sadakane},
  journal={J. Discrete Algorithms},
  year={2007},
  volume={5},
  pages={12-22},
  url={https://api.semanticscholar.org/CorpusID:15923596}
}

@article{He2010SuccinctRO,
  title={Succinct Representations of Dynamic Strings},
  author={Meng He and J. Ian Munro},
  journal={ArXiv},
  year={2010},
  volume={abs/1005.4652},
  url={https://api.semanticscholar.org/CorpusID:6157989}
}

@article{Quiring2024DerivingWD,
  title={Deriving with Derivatives: Optimizing Incremental Fixpoints for Higher-Order Flow Analysis},
  author={Benjamin Quiring and David Van Horn},
  journal={Proceedings of the ACM on Programming Languages},
  year={2024},
  volume={8},
  pages={728 - 755},
  url={https://api.semanticscholar.org/CorpusID:272071011}
}

@inproceedings{Zhu2004SymbolicPA,
  title={Symbolic pointer analysis revisited},
  author={Jianwen Zhu and Silvian Calman},
  booktitle={ACM-SIGPLAN Symposium on Programming Language Design and Implementation},
  year={2004},
  url={https://api.semanticscholar.org/CorpusID:5905331}
}

@article{McCuneWilliam,
author = {McCune, William},
title = {Experiments with discrimination-tree indexing and path indexing for term retrieval},
year = {1992},
issue_date = {Oct. 1992},
publisher = {Springer-Verlag},
address = {Berlin, Heidelberg},
volume = {9},
number = {2},
issn = {0168-7433},
url = {https://doi.org/10.1007/BF00245458},
doi = {10.1007/BF00245458},
journal = {J. Autom. Reason.},
month = oct,
pages = {147–167},
numpages = {21}
}

@inproceedings{stree,
  title={Substitution Tree Indexing},
  author={Peter Graf},
  booktitle={International Conference on Rewriting Techniques and Applications},
  year={1995},
  url={https://api.semanticscholar.org/CorpusID:29690319}
}

@article{ctree,
author = {Ganzinger, Harald and Nieuwenhuis, Robert and Nivela, Pilar},
title = {Fast Term Indexing with Coded Context Trees},
year = {2004},
issue_date = {February 2004},
publisher = {Springer-Verlag},
address = {Berlin, Heidelberg},
volume = {32},
number = {2},
issn = {0168-7433},
url = {https://doi.org/10.1023/B:JARS.0000029963.64213.ac},
doi = {10.1023/B:JARS.0000029963.64213.ac},
journal = {J. Autom. Reason.},
month = feb,
pages = {103–120},
numpages = {18}
}

@inproceedings{function-caching,
author = {Pugh, W. and Teitelbaum, T.},
title = {Incremental computation via function caching},
year = {1989},
isbn = {0897912942},
publisher = {Association for Computing Machinery},
address = {New York, NY, USA},
url = {https://doi.org/10.1145/75277.75305},
doi = {10.1145/75277.75305},
booktitle = {Proceedings of the 16th ACM SIGPLAN-SIGACT Symposium on Principles of Programming Languages},
pages = {315–328},
numpages = {14},
location = {Austin, Texas, USA},
series = {POPL '89}
}

@misc{incremental-lc,
      title={A Theory of Changes for Higher-Order Languages - Incrementalizing lambda-Calculi by Static Differentiation}, 
      author={Yufei Cai and Paolo G. Giarrusso and Tillmann Rendel and Klaus Ostermann},
      year={2013},
      eprint={1312.0658},
      archivePrefix={arXiv},
      primaryClass={cs.PL},
      url={https://arxiv.org/abs/1312.0658}, 
}

@article{patch-reconciliation,
author = {Ziegler, Parker and Lubin, Justin and Chasins, Sarah E.},
title = {Fast Direct Manipulation Programming with Patch-Reconciliation Correspondence},
year = {2025},
issue_date = {June 2025},
publisher = {Association for Computing Machinery},
address = {New York, NY, USA},
volume = {9},
number = {PLDI},
url = {https://doi.org/10.1145/3729278},
doi = {10.1145/3729278},
journal = {Proc. ACM Program. Lang.},
month = jun,
articleno = {175},
numpages = {26}
}

@inproceedings{precise-dependencies,
author = {Heydon, Allan and Levin, Roy and Yu, Yuan},
title = {Caching function calls using precise dependencies},
year = {2000},
isbn = {1581131992},
publisher = {Association for Computing Machinery},
address = {New York, NY, USA},
url = {https://doi.org/10.1145/349299.349341},
doi = {10.1145/349299.349341},
booktitle = {Proceedings of the ACM SIGPLAN 2000 Conference on Programming Language Design and Implementation},
pages = {311–320},
numpages = {10},
location = {Vancouver, British Columbia, Canada},
series = {PLDI '00}
}

@article{caching-dependencies,
author = {Abadi, Mart\'{\i}n and Lampson, Butler and L\'{e}vy, Jean-Jacques},
title = {Analysis and caching of dependencies},
year = {1996},
issue_date = {June 15, 1996},
publisher = {Association for Computing Machinery},
address = {New York, NY, USA},
volume = {31},
number = {6},
issn = {0362-1340},
url = {https://doi.org/10.1145/232629.232638},
doi = {10.1145/232629.232638},
journal = {SIGPLAN Not.},
month = jun,
pages = {83–91},
numpages = {9}
}

@inproceedings{cek,
  title={Control operators, the SECD-machine, and the $\lambda$-calculus},
  author={Matthias Felleisen and Daniel P. Friedman},
  booktitle={Formal Description of Programming Concepts},
  year={1987},
  url={https://api.semanticscholar.org/CorpusID:57760323}
}

@article{egraph,
author = {Nelson, Greg and Oppen, Derek C.},
title = {Fast Decision Procedures Based on Congruence Closure},
year = {1980},
issue_date = {April 1980},
publisher = {Association for Computing Machinery},
address = {New York, NY, USA},
volume = {27},
number = {2},
issn = {0004-5411},
url = {https://doi.org/10.1145/322186.322198},
doi = {10.1145/322186.322198},
journal = {J. ACM},
month = apr,
pages = {356–364},
numpages = {9}
}

@article{hazel-OM,
author = {Porter, Thomas J. and Kirisame, Marisa and Wei, Ivan and Panchekha, Pavel and Omar, Cyrus},
title = {Incremental Bidirectional Typing via Order Maintenance},
year = {2025},
issue_date = {October 2025},
publisher = {Association for Computing Machinery},
address = {New York, NY, USA},
volume = {9},
number = {OOPSLA2},
url = {https://doi.org/10.1145/3763117},
doi = {10.1145/3763117},
journal = {Proc. ACM Program. Lang.},
month = oct,
articleno = {339},
numpages = {28}
}

@article{hazel-EC,
author = {Moon, David and Blinn, Andrew and Porter, Thomas J. and Omar, Cyrus},
title = {Syntactic Completions with Material Obligations},
year = {2025},
issue_date = {October 2025},
publisher = {Association for Computing Machinery},
address = {New York, NY, USA},
volume = {9},
number = {OOPSLA2},
url = {https://doi.org/10.1145/3763182},
doi = {10.1145/3763182},
journal = {Proc. ACM Program. Lang.},
month = oct,
articleno = {404},
numpages = {27}
}
\end{document}